\documentclass[11pt]{article}
\pdfoutput=1
\usepackage{amsfonts, amssymb, url,
 algorithm,algorithmic, amsmath,amsthm,latexsym,color,epsfig,mathrsfs,enumerate,float}
\usepackage{graphics}
\usepackage{caption}
\usepackage{subcaption}
\usepackage{graphicx}
\setlength{\textheight}{22.5cm} \setlength{\textwidth}{6.7in}
\setlength{\topmargin}{0pt} \setlength{\evensidemargin}{1pt}
\setlength{\oddsidemargin}{1pt} \setlength{\headsep}{10pt}
\setlength{\parskip}{1mm} \setlength{\parindent}{0mm}

\theoremstyle{definition}
\newtheorem{definition}{Definition}[section]
\newtheorem*{definition*}{Definition}

\bibliographystyle{plain}
\newtheorem{theorem}{Theorem}

\newtheorem{lemma}{Lemma}[section]
\newtheorem{claim}{Claim}

\newtheorem{conjecture}{Conjecture}

\numberwithin{subcase}{case}

\title{Lower bound for sensitivity of graph properties}
\author{Ilan Karpas \thanks{The Einstein Institute of Mathematics, 
The Hebrew University of Jerusalem, Jerusalem, Israel. Email: ilan.karpas@mail.huji.ac.il.}}
\date{}

\begin{document}

\maketitle

\begin{abstract}
We prove that the sensitivity of any non-trivial graph property on $n$ vertices is at least $\lfloor \frac{1}{2}n \rfloor$ , provided $n$ is sufficiently large.
\end{abstract}

\section{Introduction}
The notion of the  \textit{sensitivity complexity} of a boolean function (see definition in section $2$), or just \textit{sensitivity}, has been widely studied, since introduced by Cook and Dwork in \cite{Cook-Dwork}, where it was used to derive tight lower bounds for the running time of a PRAM computing boolean function $f$. Simon \cite{Simon} proved that if  all of $f$'s coordinates have positive influence, one has the lower bound of $s(f)=\Omega(\log{n})$. This bound is known to be asymptotically tight, as demonstrated by the so called \textit{monotone address function}. 

There have been attempts to find lower bounds for the sensitivity of certain classes of boolean functions. In particular, the class of \textit{weakly symmetric} functions has been studied. For this class, the best known lower bound is  $s(f)=\Omega(\log{n})$ for any non-constant weakly symmetric function $f$ in $n$ variables. Indeed, for weakly symmetric functions all coordinates have the same influence, from which it follows that all coordinates have positive influences if $f$ is also non-constant, and as we have already mentioned, this guarantees $s(f)=\Omega(\log{n})$.

Yet, so far, every non-constant weakly symmetric function $f$ found, has sensitivity $s(f)=\Omega(n^{1/3})$. Turan conjectured in \cite{Turan} that there exists some absolute constant $c>0$, so that $s(f)=\Omega(n^c)$ for any non-constant weakly symmetric boolean function.

Perhaps the main open problem regarding sensitivity, is its relation to the seemingly similar notion of \textit{block sensitivity} (see section $2$ for definition). Nisan and Szegedy posed the conjecture \cite{Nisan-Szegedy} in $1994$ that the two are polynomially equivalent.

This would have important implications, since block sensitivity is known to be polynomially equivalent to many important measures of the complexity of a function, including \textit{Decision tree complexity}, \textit{certificate complexity}, \textit{degree} as a real polynomial, \textit{Quantum query complexity} etc. Finding bounds on the separations between pairs of the aforementioned measures is a lively field  (see, for example, \cite{Ambainis, Beals, Nisan, Nisan-Szegedy}). While at the moment block sensitivity can only be shown to be exponentially upper bounded by sensitivity \cite{Kenyon-Kutin}, the largest known gap is quadratic, as shown first by \cite{Rubinstein}. For an excellent survey on the sensitivity conjecture, see \cite{Hatami}.

The sensitivity conjecture implies Turan's conjecture on weakly symmetric functions, because if $f$ is weakly symmetric, then $bs(f)=\Omega(n^\frac{1}{3})$  \cite{Sun2}.

Turan, in the same paper \cite{Turan}, studied the sensitivity of \textit{graph properties}. He showed that for any non-trivial graph property $f$ for $n$ vertices (and so, ${n \choose 2}$ boolean variables), $s(f)\geq \lfloor \frac{n}{4} \rfloor$. He conjectured:

\begin{conjecture} \cite{Turan} \label{Turan}
Let $f$ be a non-trivial graph property for graphs with $n$ vertices.
Then $s(f)\geq n-1$.
\end{conjecture}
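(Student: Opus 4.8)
\emph{Plan.} The idea is to exhibit a single graph on which $f$ is sensitive in linearly many edge-coordinates, by taking that graph to be an edge-minimal ``critical'' instance and then moving sensitivity around the vertex set using the $S_n$-symmetry of $f$ through vertex transpositions. This naturally yields a bound of order $n/2$, and I will point out where the remaining gap to $n-1$ sits.

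First I would normalize. Since $G\mapsto\overline G$ (edge complement) and $f\mapsto 1-f$ both send graph properties to graph properties and preserve the sensitivity, after possibly applying them I may assume there is a graph $H$ with $f(H)=1$, an edge $uv\in E(H)$ with $f(H-uv)=0$, and $H$ edge-minimal among $1$-valued graphs possessing such a ``down-sensitive'' edge. Such an $H$ exists: travel by single edge additions from a minimum-size $1$-certificate (equivalently from $\overline{K_n}$, or from any fixed $0$-graph toward any fixed $1$-graph) and take the first graph of value $1$; only the degenerate case $f(\overline{K_n})=1$ needs the complement involution to set things up. The edge $uv$ is one coordinate already known to be sensitive for $H$, and likewise for $H-uv$.

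Second, the symmetry mechanism. Fix $w\notin\{u,v\}$ and let $\tau=(v\;w)$ act on vertices; since $f$ is a graph property, $f(\tau H)=1$ and $f(\tau H-uw)=f(\tau(H-uv))=0$, so $uw$ is sensitive for $\tau H$. The work is to transport this to $H$ itself (or to the twin critical graph $H-uv$): if $v$ and $w$ have the same neighbours outside $\{u,v,w\}$ and attach to $u$ in the same way, then $\tau H=H$ and $uw$ is already sensitive for $H$; more generally, a short case analysis on how $w$ is joined to $u$ and to $v$, together with the edge-minimality of $H$ (deleting $uw$ or $vw$ from $H$ would yield a smaller $1$-graph which, via $\tau$, still carries a down-sensitive edge, a contradiction), should give that for \emph{every} $w$ at least one of $uw,vw$ is sensitive for $H$ or for $H-uv$. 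The $\lfloor n/4\rfloor$ bound quoted above is what a cruder version of this step yields.

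Third, the count and the obstacle. Granting that claim for all $n-2$ vertices $w\neq u,v$, and noting that the edges $uw$ and the edges $vw$ are pairwise distinct and distinct from $uv$, one of the two graphs $H$, $H-uv$ collects at least $\lceil (n-2)/2\rceil$ such sensitive edges; adding $uv$, sensitive for both, gives $s(f)\ge\lceil (n-2)/2\rceil+1\ge\lfloor n/2\rfloor$ for $n$ large. The conjectured $s(f)\ge n-1$ would drop out of the identical count if one could always route the sensitive edge into one fixed graph instead of splitting it between $H$ and $H-uv$. That is exactly where I expect the difficulty: the transposition trick treats $u$ and $v$ symmetrically, and for a given $w$ the sensitive edge it certifies lives in one of several graphs depending on the fine structure of $w$'s attachment to $\{u,v\}$, so in the worst case the certified edges are shared out evenly between two graphs. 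Removing this factor of two — obtaining a sensitive edge at $w$ through $u$ and, independently, through $v$ inside a single graph, or disposing of rigid / highly asymmetric near-minimal certificates by a separate argument — is the genuinely hard part, and is presumably why the full conjecture remains open while $\lfloor n/2\rfloor$ is within reach.
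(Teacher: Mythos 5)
You are being asked to ``prove'' a statement that the paper itself records as an open \emph{conjecture} (Turán's Conjecture~\ref{Turan}); the paper does not prove it, and its own Theorem~\ref{main theorem} only establishes $s(f)\geq\lfloor n/2\rfloor$ for large $n$. You honestly acknowledge at the end that your mechanism also caps out at $\lfloor n/2\rfloor$, so as a proof of $s(f)\geq n-1$ the proposal is, by your own admission, incomplete. That much is a correct self-assessment.

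However, the $\lfloor n/2\rfloor$ portion of your sketch also has a genuine gap, precisely at the sentence ``a short case analysis \dots should give that for \emph{every} $w$ at least one of $uw,vw$ is sensitive for $H$ or for $H-uv$.'' The transposition $\tau=(v\;w)$ shows $f(\tau H)=1$ and $f(\tau H - uw)=0$, but $\tau H$ need not be a one-edge perturbation of $H$ or of $H-uv$: if $w$ lies inside a nontrivial component of $H$, or if $v$ has degree greater than one, then $\tau H$ differs from $H$ in many coordinates and the derived sensitivity is attached to the ``wrong'' graph $\tau H$, not to $H$ or $H-uv$. The transport only goes through cleanly when (say) $\deg_H(v)=1$ and $w$ is isolated, in which case $\tau H=(H-uv)+uw$. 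Indeed the hard case in the paper (Case~3 of the proof of Theorem~\ref{main theorem}) is exactly when $H$ is a minimal graph with a component that is a single edge: there the paper does \emph{not} obtain $\lfloor n/2\rfloor$ by counting sensitive edges at one point, but instead builds two explicitly isomorphic graphs $H$ and $H'$ and derives $f(H)\neq f(H')$ as a contradiction, after a structural Lemma~\ref{mingraphstructure} shows $m=\tfrac12 n - o(n)$ matching edges and $r=o(n)$ isolated vertices. That reformulation is needed precisely because the naive ``one sensitive edge per $w$'' count fails.

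Beyond that gap, the route you propose is genuinely different from the paper's. The paper fixes a minimal graph $G\in m(f)$ and splits on the invariants $\delta'(f)$ (positive minimum degree) and $c(f)$ (smallest nontrivial tree component), running an iterative edge-rewiring algorithm in Cases~1 and~2 that at every step either finds a removable edge (contradicting minimality of $G$) or accumulates enough sensitive coordinates from isolated vertices, and falling back to the isomorphism-contradiction in Case~3. Your approach is a single transposition/pigeonhole scheme closer in spirit to Turán's original $\lfloor n/4\rfloor$ bound. It is shorter and more conceptual when it applies, but it does not control the situation where $w$ attaches to $u,v$ through a nontrivial component, and that situation is exactly what forces the paper's intricate case analysis. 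So to turn the sketch into a correct $\lfloor n/2\rfloor$ proof you would at minimum have to reproduce something like the paper's treatment of the $c(f)=1$ case, and to reach $n-1$ you correctly identify that one would need to collapse the $H$ versus $H-uv$ dichotomy to a single point — which, as the paper's discussion makes clear, nobody currently knows how to do.
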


 Turan's conjecture is the best possible, since the property "$G$ has a vertex of degree $n-1$" has sensitivity $n-1$. Wegener \cite{Wegener} proved $s(f)\geq n-1$ for any non-trivial \textit{monotone} graph property $f$, but made no improvement on Turan's lower bound for graph properties in general. This was done only in $2011$, when Sun \cite{Sun1} improved Turan's lower bound to $s(f)\geq \frac{6}{17}n$. In  \cite{Gao}, lower bounds for the sensitivity of bipartite graph-properties were obtained.

Our main result in this paper, is an improvemnt of the lower bound in \cite{Sun1}. We prove:

\begin{theorem}
\label{main theorem}
Let $f$ be a non-trivial graph property on $n$ vertices, for sufficiently large $n$. Then $s(f)\geq \lfloor \frac{1}{2}n \rfloor$.
\end{theorem}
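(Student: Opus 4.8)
\medskip
\noindent\textbf{Proof proposal.}

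The plan is to locate a highly symmetric ``critical'' graph and exploit that two edges lying in the same orbit of its automorphism group are sensitive or insensitive together. First I normalize: replacing $f$ by $1-f$, by $G\mapsto f(\bar G)$, or by $G\mapsto 1-f(\bar G)$ changes neither $s(f)$ nor the hypotheses, so I may assume $f$ of the empty graph equals $0$. Starting from any graph with $f=1$ (one exists since $f$ is non-constant) and deleting edges while keeping $f=1$, I cannot reach the empty graph, so the process stops at a \emph{minimal $1$-graph} $G$: $f(G)=1$ and $f(G-e)=0$ for every edge $e$. Then every edge of $G$ is sensitive, so $s(f)\ge s(f,G)\ge k$ where $k:=|E(G)|$; hence I may assume $k<\lfloor n/2\rfloor$, which forces $G$ to have a set $I$ of at least $n-2k\ge 2$ isolated vertices, and every permutation of the vertex set fixing $V\setminus I$ pointwise is an automorphism of $G$.

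Next I rule out sensitive non-edges meeting $I$. If $\{z,x\}$ is a sensitive non-edge of $G$ with $z\in I$, then swapping $z$ with any $z'\in I\setminus\{x\}$ (an automorphism of $G$) shows $\{z',x\}$ is also sensitive; these non-edges are pairwise distinct and distinct from the $k$ edges of $G$, which lie inside $V\setminus I$. So in this case $s(f,G)\ge k+|I|\ge k+(n-2k)=n-k>\lfloor n/2\rfloor$, except in the one delicate sub-case where $x\in I$ and $|I|=2$; there $n$ is even, $k=\tfrac n2-1$, and a single extra sensitive non-edge already gives $k+1=\lfloor n/2\rfloor$. Thus I may add the \emph{rigidity hypothesis}: $f(G+e)=1$ for every non-edge $e$ incident to $I$. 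Choosing $G$ with the fewest edges among all minimal $1$-graphs also forces every minimal $1$-graph obtained from $G$ by deleting one edge and adding one edge at $I$ to have exactly $k$ edges, which pins down the behaviour of $f$ on the graphs $(G-e)+\{z,x\}$.

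From here I would push the rigidity hypothesis to a contradiction. Using an isolated vertex $z\in I$ as a ``sink'' and the identities $f(G+\{z,x\})=1$, I re-run the minimal-graph machinery on the graphs $G+\{z,x\}$, transplanting edges of $G$ onto $z$ and onto the rest of $I$ while tracking the value of $f$, the edge count, and $|I|$; in the sub-case where $f$ of the complete graph is also $0$ I would dually build a \emph{maximal $1$-graph} $\Gamma$ (with $f(\Gamma)=1$ and $f(\Gamma+e)=0$ for every non-edge $e$), which, after the dual reduction, has at least two universal vertices and the dual rigidity property, and then play $G$ against $\Gamma$. Combined with the fact that minimal $1$-graphs cannot be too small, this should leave only a short list of near-extremal shapes for $G$ --- essentially a near-perfect matching plus two or three isolated vertices, and similar configurations --- each of which is either eliminated by a direct computation or directly produces $\lfloor n/2\rfloor$ sensitive edges by driving a vertex of $I$ toward being universal.

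The main obstacle is precisely this last case analysis. Once $k$ is within $O(1)$ of $n/2$ all the slack in the counting is gone, so one must carry the edge count, the size of $I$, and the degrees of the one or two ``active'' vertices simultaneously and check that the arithmetic closes to the exact value $\lfloor n/2\rfloor$ in every parity class of $n$. This is what forces the assumption that $n$ be sufficiently large, and it is where the gain over Sun's constant $\tfrac{6}{17}$ has to be made.
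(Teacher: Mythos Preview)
Your opening is sound and essentially reproduces Tur\'an's original $\lfloor n/4\rfloor$ argument: take a minimal $1$-graph $G$, count its edges, locate the isolated set $I$, and dispose of the case where some non-edge at $I$ is sensitive by orbit-counting. The ``rigidity hypothesis'' you arrive at --- $f(G+e)=1$ for every non-edge $e$ touching $I$ --- is exactly the starting point of the hard part of the proof.

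The genuine gap is everything after that. Your third paragraph is a plan, not an argument: ``re-run the minimal-graph machinery'', ``transplant edges onto $z$'', ``play $G$ against $\Gamma$'', ``leave only a short list of near-extremal shapes''. None of these steps is carried out, and there is no mechanism in your proposal that forces $G$ toward a near-perfect matching. In particular, choosing $G$ with the \emph{fewest edges} among minimal $1$-graphs does not by itself control the shape of $G$: after you add an edge at $I$ and delete an edge of $G$, the resulting minimal $1$-graph $G'$ inside may well have $k$ edges again, so no contradiction arises from edge-count alone. You need an invariant of the \emph{entire family} $m(f)$ of minimal $1$-graphs that (i) is preserved or improved by your transplant moves and (ii) certifies that the newly found $G'$ lies strictly inside the old $G$.

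This is precisely what the paper supplies. It introduces two global invariants of $m(f)$: the minimum positive degree $\delta'(f)=\min_{G\in m(f)}\delta'(G)$ and the minimum tree-component size $c(f)=\min_{G\in m(f)}c(G)$, and splits into three cases. In Case~1 ($\delta'(f)\ge 2$) and Case~2 ($\delta'(f)=1$, $c(f)\ge 2$) an explicit iterative algorithm adds edges from $G$ to $I$ (building, respectively, low-degree spokes or a copy of a tree-construction sequence), and a careful two-sided sensitivity count --- on the graph just before and just after the first failure --- shows $f$ never drops to $0$; one then removes an edge of $G$ and extracts a minimal $G'\subseteq H$. The punchline is that $G'=G'_{[k]}\subseteq H_{[k]}\subsetneq G$ (resp.\ $G'=G'_{(k)}\subsetneq G$), where the global invariant guarantees the first equality. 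Case~3 ($c(f)=1$) is where your intuition about a near-matching is vindicated: a structural lemma shows $G$ has $n/2-o(n)$ isolated-edge components and $o(n)$ isolated vertices, and the contradiction is obtained not by finding a smaller minimal graph but by exhibiting two \emph{isomorphic} graphs $H,H'$ with $f(H)\neq f(H')$, built by attaching a star of size $\lceil n/6\rceil$ to $G$ and to $G-e$ respectively. Your proposal contains none of these three mechanisms, and the dual ``maximal $1$-graph'' idea, while natural, does not obviously substitute for them.
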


The paper is organized as follows: in section $2$, we provide the relevant definitions and notations. The reader familiar with the topic may want to skip to subsection $2.2$, where we provide non-standard definitions used in this paper. In section $3$, we prove Theorem \ref{main theorem}, and in section $4$ we discuss related open problems.

\section{Preliminaries and Definitions}

\subsection{Standard Definitions}

\begin{definition}
Let $f:\{0,1\}^n \to \{0,1\}$ be a boolean function, and $x \in \{0,1\}^n$ some point. The \textit{sensitivity of $f$ at point $x$}, which we denote as $s(f,x)$, is
\begin{align*}
 s(f,x):=|\{i\in [n ]|f(x \oplus e_i)\neq f(x)\}|.
\end{align*}
In other words, it is the number of neighbours $y$ of $x$ in the hamming cube,
with $f(y)\neq f(x)$.

The \textit{sensitivity of f}, $s(f)$, is just the maximum sensitivity over all points:
\begin{align*}
s(f)=\max_x s(f,x).
\end{align*}
\end{definition}

\begin{definition}

Let $f:\{0,1\}^n \to \{0,1\}$ be a boolean function, and $x \in \{0,1\}^n$ some point. The \textit{ block sensitivity of $f$ at point $x$}, which we denote as $bs(f,x)$, is the maximum number $t$, such that there are $t$ pairwise disjoint subsets of $[n]$, $B_1,\dots ,B_t$, with the property $f(x)\neq f(x^{B_i})$ for all $i \in [t]$. Here, $x^{B_i}$ is obtained by flipping in $x$ every coordinate $j \in B_i$.

The \textit{ block sensitivity of f}, $bs(f)$, is just the maximum block sensitivity over all points:
\begin{align*}
bs(f)=\max_x bs(f,x).
\end{align*}
\end{definition}

Of course $s(f)\leq bs(f)$ for every boolean function $f$, because each block can be a sensitive coordinate.

\begin{definition}

A group $\Gamma \leq S_n$ is called \textit{transitive}, if for every $i,j \in [n]$, there exists an element $\pi \in \Gamma$, so that $\pi(i)=j$.

\end{definition}

\begin{definition}
Let $x=x_1\dots x_n $ be an $n$ bit boolean string, and let $\pi \in S_n$ be some permutation of $n$ elements. Then the \textit{action} of $\pi$ on $x$, which we denote by $\pi x$, is the string $\pi x:= x_{\pi(1)}\dots x_{\pi(n)}$.
\end{definition}

\begin{definition}
A boolean function $f: \{0,1\}^n \to \{0,1\}$ is called \textit{weakly symmetric}, if there is some transitive group $\Gamma \leq S_n$, so that for every $x \in \{0,1\}^n$ and every $\pi \in \Gamma$, $f(x)=f(\pi x)$. We say, in that case, that $f$ is \textit{closed under $\Gamma$}.
 \end{definition}

We now move on to discuss graph properties.

 For every $\{i,j\} \in {[n] \choose 2}$, define a suitable boolean variable $x_{\{i,j\}}\in \{0,1\}$. A string
$x \in \{0,1\}^{n \choose 2}$, then, can be considered simply as an assignment to all these variables. This set can also be identified with the set of all graphs with vertex set $[n]$, via the bijection

\begin{equation}
\label{bijection}
x \longrightarrow \Big([n], \big\{\{i,j\} \in {[n] \choose 2}\big|x_{i,j}=1\big\}\Big). 
\end{equation}

Consider the group $S_n$ acting on the set ${[n] \choose 2}$ in the following way: for $\{i,j\} \in {[n] \choose 2}$, and $\pi \in S_n$, we write  $\pi\{i,j\}:=\{\pi(i),\pi(j)\}$. Note that $S_n$ acts transitively on ${[n] \choose 2}$.

\begin{definition}
$f:\{0,1\}^{[n] \choose 2}\to \{0,1\}$ is called a \textit{graph property}, if
for every $\pi \in S_n$ and every point $x \in \{0,1\}^{[n] \choose 2}$, $f(x)=f(\pi x)$.
\end{definition}

 By abuse of notation, from now on we shall think of the domain of $f$ as the family of all graphs with vertex set $[n]$, using the bijection in \eqref{bijection}.
$f$ being a graph property, in this notation, means that $f(G)=f(H)$ whenever $G$ and $H$ are isomorphic graphs. Furthermore, we will sometimes think of a graph simply as its set of edges, so for example $|G|$ will denote the number of edges in $G$. This should not create confusion, since the vertex set is always understood to be $[n]$, unless explicitly stated otherwise.

\subsection{non-standard Definitions and notations}

Throughout this subsection, $f$ is a non-trivial graph property on graphs with vertex set $[n]$, and $f(\overline{K_n})=0$. 

\begin{definition} \label{positive}

For a graph $G=([n],E)$, $G \neq \overline{K_n}$, we define the \textit{positive minimum degree} of $G$, denoted by $\delta'(G)$, to be the minimum degree of any non-isolated vertex in $G$. We write $\delta'(\overline{K_n})=\infty$.\\
 For any natural number $k$, we denote by $G_{[k]}$ the maximal subgraph of $G$ (with vertex set $[n]$), for which $\delta'(G_{[k]})\geq k$.

Observe that if $\delta'(G) \geq k$, then $G_{[k]}=G$. Otherwise $G_{[k]} \subsetneq G$.
\end{definition}

\begin{definition}
Any graph $G$ can have three kinds of connected components: isolated vertices, components containing a cycle, and trees containing at least one edge.
We define the \textit{positive minimum tree component size}, which we denote by $c(G)$, to be the number of edges in the smallest connected component of $G$ that is a tree, \textbf{but not an isolated vertex}. If $G$ has no connected components of this form, we write $c(G)=\infty$.

For any natural number  $k$, we denote by $G_{(k)}$ the maximal subgraph of $G$ (with vertex set $[n]$), so that $c(G_{(k)})\geq k$.\\
Observe that if $c(G) \geq k$, then $G_{(k)}=G$. Otherwise $G_{(k)} \subsetneq G$.
\end{definition}

\begin{definition}
We call a graph $G$ \textit{minimal} with respect to $f$, if $f(G)=1$ and $f(G')=0$ for every graph $G' \subsetneq G$. The set of all minimal graphs for $f$ is denoted by $m(f)$. Notice that $m(f)\neq \emptyset$ because we took $f$ to be non-trivial.
\end{definition}

We write $\delta'(f)=\min_{G \in m(f)} \delta'(G)$, and $c(f)=\min_{G \in m(f)}c(G)$.

\begin{definition}
Let $T$ be a tree with $k$ edges. A \textit{tree construction sequence of $T$}, is a sequence of $k$ trees $T^{(1)},\dots ,T^{(k)}$ such that:
\begin{itemize}

\item  For every $i \in [k]$,  $T^{(i)}$ is a tree with $i$ edges.

\item $T^{(k)}$ is isomorphic to $T$.

\item For every $i \in [k-1]$, $T^{(i+1)}$ is obtained from $T^{(i)}$ by adding a new vertex, and connecting this vertex by an edge to some other vertex in $T^{(i)}$.
\end{itemize}
(See Figure~\ref{fig:Construction}).
\end{definition}

\vspace{0.5 cm}
\begin{figure} [ht] 
\centering
\vspace{0.5cm}
\includegraphics[width=0.5\textwidth]{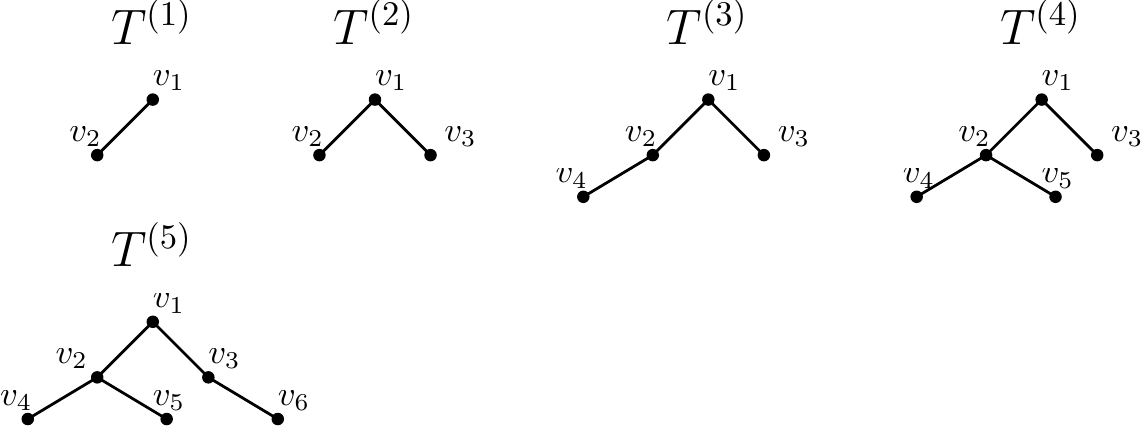}
\caption{Tree construction sequence} \label{fig:Construction}
\end{figure}

\section{proof of main theorem}

Before proving the main theorem, we need the following  lemma, found in \cite{Sun1}
\begin{lemma}
\label{deg1}
Let $f$ be a graph property, and $G$ be a graph. Let $v \in V(G)$ be a vertex of degree one, and $e=\{v,u\}$ the unique edge in $G$ where $v$ occurs. Then either $s(f) \geq |I(G)|+1$, or $f(G)=f(G \setminus e)$.
\end{lemma}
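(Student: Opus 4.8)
The plan is to argue by contradiction on the second alternative: assume $f(G)\neq f(G\setminus e)$ and deduce $s(f)\geq |I(G)|+1$ by exhibiting that many sensitive coordinates of $f$ at the single point $H:=G\setminus e$. Here $I(G)$ is the set of isolated vertices of $G$; observe at the outset that $v\notin I(G)$ (it has degree one) and $u\notin I(G)$ (it is a neighbour of $v$).

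The first step is a structural observation about $H$. Since $H\subseteq G$, every vertex of $I(G)$ is still isolated in $H$, and in addition $v$ has become isolated in $H$ (its only edge was $e$). Consequently the $|I(G)|+1$ pairs
$$e=\{v,u\}\quad\text{and}\quad \{w,u\}\ \ (w\in I(G))$$
are pairwise distinct non-edges of $H$ (distinctness uses $v,u\notin I(G)$ and that the vertices $w$ are distinct).

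The second step is an isomorphism argument exploiting that $f$ is a graph property. Fix $w\in I(G)$. The transposition $(v\;w)\in S_n$ maps $H$ to itself, because $v$ and $w$ have the same (empty) neighbourhood in $H$; applying it to $H+e=G$ carries the edge $\{v,u\}$ to $\{w,u\}$ and fixes every other edge, so $H+\{w,u\}$ is isomorphic to $G$. Hence $f\big(H+\{w,u\}\big)=f(G)$, and also $f(H+e)=f(G)$ trivially.

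Finally one assembles the count: by the contradiction hypothesis $f(G)\neq f(H)$, so flipping any one of the $|I(G)|+1$ coordinates $e$ or $\{w,u\}$ ($w\in I(G)$) turns $H$ into a graph on which $f$ takes the value $f(G)\neq f(H)$. Thus each is a sensitive coordinate of $f$ at $H$, giving $s(f)\geq s(f,H)\geq |I(G)|+1$. I do not anticipate a genuine obstacle; the only points requiring care are checking that the listed coordinates really are pairwise distinct non-edges of $H$, and that the transposition is an isomorphism of the relevant graphs — both of which reduce to the fact that $v$ and the $w$'s are isolated in $H$ while $u$ is not.
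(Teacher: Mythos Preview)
Your argument is correct and follows essentially the same route as the paper's proof: assume $f(G)\neq f(G\setminus e)$, and observe that for each $w\in I(G)$ the graph $(G\setminus e)\cup\{u,w\}$ is isomorphic to $G$, yielding $|I(G)|+1$ sensitive coordinates at $G\setminus e$. You are simply more explicit than the paper about verifying the isomorphism (via the transposition $(v\;w)$) and the distinctness of the coordinates.
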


\begin{proof}
Let $f, G, v, e$ be as in the statement of the lemma. Assume that $f(G) \neq f(G \setminus e)$,  and let $I(G)=\{w_1,\dots ,w_{|I(G)|}\}$ . Observe  that   $G$ is isomorphic to $(G \setminus e) \cup \{u, w_i\}$ for every $1 \leq i \leq |I(G)|$. 
Thus,  $f(G ) = f((G \setminus e) \cup \{u, w_i\})$ for every $w_i \in I(G)$. But this means that $f(G \setminus e) \neq  f((G \setminus e) \cup \{u, w_i\})$, and as mentioned, $f(G \setminus e) \neq f(G)$. So the sensitivity of $f$ at $G \setminus e$ is at least $|I(G)|+1$, as claimed.
\end{proof}

\begin{proof}[proof of Theorem \ref{main theorem}]
Throughout the proof, we assume that all graphs have vertex set $[n]$.
Assume without loss of generality that for the graph with no edges $\overline{K_n}$, $f(\overline{K_n})=0$, and assume by contradiction that $s(f)< \lfloor \frac{n}{2} \rfloor$.

It is an easy, but important, observation, that if $G \in m(f)$ then $G$ is sensitive at any of the edges it contains, which means that

\begin{equation}
\label{mingraph size}
|G|\leq s(f,G)<\lfloor \frac{1}{2}n \rfloor
\end{equation}

As already mentioned, by our assumptions on $f$, $c(f)$ and $\delta'(f)$ are well defined. We divide our proof to three cases, based on their possible values:\\

\underline{\textbf{Case 1} ($\delta'(f)\geq 2$)}  
\newline \newline
 Let $G \in m(f)$, $\delta'(G)=\delta'(f)$.
 In this case, since there are no vertices of degree one in $G$, in particular any connected component which is not an isolated vertex, can not be a tree. Thus, for every connected component in $G$, the number of edges in that component is at least the number of vertices in it. Summing over all connected components, produces the inequality
 
\begin{equation}
\label{edges_isolated_vertices}
|G|+|I(G)|\geq n.
\end{equation}

Combining this with \eqref{mingraph size}

\begin{equation}
\label{isolated}
|I(G)|>\frac{1}{2}n
\end{equation}

We show an algorithm that finds a graph $G' \subsetneq G$, with $f(G')=1$. But we took $G \in m(f)$, which is a contradiction.
Denote the isolated vertices of $G$ by $I(G)=\{v_1,\dots,v_m\}$.

\begin{algorithm} [ht]
\caption{Algorithm for $\delta'(f)\geq 2$}
\begin{algorithmic}
\STATE $H \gets G$
\STATE $k \gets \delta'(G)$
\STATE $m \gets |I(G)|$
\STATE Find a vertex $v$ of degree $k$. Denote $N(v)=\{u,u_1,\dots,u_{k-1}\}$ 
\FOR{$i:=1$ to $k-1$} 
 \FOR {$j:=1$ to $m$}
   \IF {$\exists e \in G$ so that $f(H \setminus e)=f(H)$}
    \STATE $H \gets H\setminus e$
    \RETURN some minimal graph $G' \subset H$ \COMMENT{We prove that such $G'$ must exist}
   \ENDIF
   \STATE $H \gets H\cup \{u_i,v_j\}$
   \ENDFOR
 \ENDFOR
\STATE $H \gets H \setminus \{v,u\}$
\RETURN  some minimal graph $G' \subset H$ \COMMENT{ We prove that such $G'$ must exist}
\end{algorithmic}
\end{algorithm}

\pagebreak

\begin{claim}
\label{valuestays}
At any stage in the algorithm, $f(H)=1$. Thus $f(G')=1$.
\end{claim}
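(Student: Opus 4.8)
\emph{Approach.} I would prove the invariant ``$f(H)=1$'' by induction on the sequence of graphs that the variable $H$ runs through during an execution of the algorithm. The base case is immediate: at the start $H=G$ and $f(G)=1$ since $G\in m(f)$. Three kinds of update occur: (i) the early deletion $H\gets H\setminus e$ inside the inner loop; (ii) the additions $H\gets H\cup\{u_i,v_j\}$; (iii) the closing deletion $H\gets H\setminus\{v,u\}$. Type (i) is trivial, since that branch is entered only when $f(H\setminus e)=f(H)$. So all the content is in (ii) and (iii), where the point is that changing the value of $f$ would force $s(f)\ge\lfloor n/2\rfloor+1$, contradicting the standing assumption $s(f)<\lfloor n/2\rfloor$.

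\emph{The addition step (the main obstacle).} Here I must show $f(H)=1$ implies $f(H')=1$, where $H':=H\cup\{u_i,v_j\}$. The device is that the algorithm keeps manufacturing interchangeable vertices: in $H'$ the vertices $v_1,\dots,v_j$ all have neighbourhood $\{u_1,\dots,u_i\}$, while in $H$ the vertices $v_j,\dots,v_m$ all have neighbourhood $\{u_1,\dots,u_{i-1}\}$ (for $i=1$, they are simply isolated). Hence $H'\setminus\{u_i,v_\ell\}\cong H$ for all $\ell\le j$, and $H\cup\{u_i,v_\ell\}\cong H'$ for all $\ell\ge j$. Now suppose $f(H')=0$. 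From the first batch of isomorphisms, the $j$ coordinates $\{u_i,v_1\},\dots,\{u_i,v_j\}$ are all sensitive at $H'$, so $s(f)\ge s(f,H')\ge j$. For the matching bound, note the algorithm did not return at this iteration, so the inner test failed, i.e. $f(H\setminus e)=0$ for every edge $e$ of $G$; adjoining the $m-j+1$ coordinates $\{u_i,v_j\},\dots,\{u_i,v_m\}$ (sensitive at $H$ by the second batch of isomorphisms, and distinct from the edges of $G$ because the $v_\ell$ are isolated in $G$) gives $s(f)\ge s(f,H)\ge|G|+(m-j+1)\ge n-j+1$, using \eqref{edges_isolated_vertices} and $m=|I(G)|$. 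Therefore $s(f)\ge\max(j,\,n-j+1)\ge\lfloor n/2\rfloor+1$, a contradiction; so $f(H')=1$.

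\emph{The closing deletion.} This is the same idea, more easily. When both loops finish, in the current $H$ the vertex $v$ still has its $G$-neighbourhood $\{u,u_1,\dots,u_{k-1}\}$ (no edge was ever added at $v$), and every $v_\ell$ has neighbourhood $\{u_1,\dots,u_{k-1}\}$; so in $H\setminus\{v,u\}$ the $m+1$ vertices $v,v_1,\dots,v_m$ are mutually interchangeable, and $(H\setminus\{v,u\})\cup\{u,v\}=H$ while $(H\setminus\{v,u\})\cup\{u,v_\ell\}\cong H$ for every $\ell$. If $f(H\setminus\{v,u\})=0$, then all $m+1$ coordinates $\{u,v\},\{u,v_1\},\dots,\{u,v_m\}$ are sensitive at $H\setminus\{v,u\}$, so $s(f)\ge m+1>\lfloor n/2\rfloor$ by \eqref{isolated}, a contradiction; hence $f(H\setminus\{v,u\})=1$. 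This completes the induction, so $f(H)=1$ at every stage. The closing assertion then follows: $f=1$ on the graph $H$ present when the algorithm returns, so deleting edges one at a time while keeping $f=1$ yields a minimal graph $G'\in m(f)$ with $f(G')=1$ — in particular such a $G'$ exists, which is what the algorithm's comments claim.
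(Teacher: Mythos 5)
Your proof is correct and follows essentially the same approach as the paper: both establish the invariant by analyzing the two nontrivial update types, and for the addition step both count the $j$ sensitive non-edges at the new graph and the $|G|+(m-j+1)$ sensitive coordinates at the old graph; the only cosmetic difference is that you conclude via $\max(j,\,n-j+1)\ge\lfloor n/2\rfloor+1$ whereas the paper adds the two sensitivity bounds to contradict $|G|+|I(G)|\ge n$ directly.
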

\begin{proof}
Assume by contradiction that this is not the case. Then there must be a first time during the execution of the algorithm when the value of $f(H)$ becomes zero. When we refer to $H$ in the proof, henceforth, we mean the graph $H$ in the algorithm immediately after its value becomes zero for the first time. This can occur either during some iteration of the for-loop, or at the end of the loop, while removing edge $\{v,u\}$.  \\\\
We deal with the latter case first:
Let $H^+=H\cup \{v,u\}$. That is, the graph in the algorithm just before removing edge $\{v,u\}$. By our assumption $f(H^+)=1$. But notice that $H \cup \{v_i,u\}$ is isomorphic to $H^+$, for all $1 \leq i \leq m$. Thus, $H$ is sensitive at all these edges, and at $\{v,u\}$, which means that \newline $s(f,H)\geq m+1 >\frac{1}{2}n$ , contradicting \eqref{isolated} (see Figure ~\ref{fig:out-loop}).\\

Next, we deal with the case that the value of $f(H)$ becomes zero for the first time during some iteration of the for-loop. This can not happen inside the if-condition, because the algorithm only removes edge $e$ inside the if-condition if guaranteed that removing it would not change the value of the function $f$. Hence, it could only happen when adding edge $\{u_i,v_j\}$, for some $1 \leq i \leq k-1$, $1 \leq j \leq m$. Denote  by $H^-=H \setminus \{u_i,v_j\}$ the graph just before adding this edge. Since the if-condition for $H^-$ did not hold, $H^-$ is sensitive at every edge contained in $G$. Furthermore, $H$ is isomorphic to $H^-\cup \{u_i,v_l\}$ for every $j \leq l \leq m$. That is, $H^-$ is also sensitive on any of these $m-j+1$ edges. Together, we see that
 
\begin{equation}
\label{H^-}
m-j+1+|G|\leq s(f,H^-)\leq s(f)<\frac{1}{2}n.
\end{equation}

Now let's turn our attention to $H$ itself. We know that $f(H) \neq f(H^-)$, and that $H^-$ is isomorphic to $H\setminus \{u_i, v_l\}$ for any $1 \leq l \leq j$. That is, $H$ is sensitive on all these edges. So 
\begin{equation}
\label{H}
j \leq s(f,H) <\frac{1}{2}n.
\end{equation}

Taking \eqref{H^-}+\eqref{H} gives $m+|G|+1<n$, which is a contradiction to \eqref{edges_isolated_vertices} (see Figure ~\ref{fig:in-loop}).
So $f(H)=1$, and since $G'$ is a minimal graph contained in $H$, by definition $f(G')=1$.
\end{proof}

\begin{claim}
\label{contained1}
$G' \subsetneq G$.
\end{claim}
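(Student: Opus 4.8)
The plan is to follow what the algorithm actually outputs at each of its two \textbf{return} statements, and to combine this with the single external fact that every minimal graph has positive minimum degree at least $\delta'(f)$. First I would record the shape of $H$ at the return point. If the algorithm returns from inside the if-condition, it has just deleted an edge $e\in G$; set $R=\{e\}$. If it returns from the last line, it has just deleted $\{v,u\}\in G$; set $R=\{\{v,u\}\}$. In either case $R$ is a non-empty set of edges of $G$, and the current graph $H$ equals $(G\setminus R)\cup S$, where $S$ is the set of edges $\{u_i,v_j\}$ that the algorithm has added so far during the double for-loop. Crucially, every vertex $v_j$ appearing here lies in $I(G)$, hence is isolated in $G$, so no edge of $S$ belongs to $G$; the union $(G\setminus R)\cup S$ is therefore disjoint, and $H\setminus S=G\setminus R$.

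Next I would invoke Claim~\ref{valuestays}, which gives $f(H)=1$; consequently $H$ contains a minimal graph --- take a subgraph of $H$ on which $f$ equals $1$ with as few edges as possible; any proper subgraph of it is then a subgraph of $H$ with fewer edges, hence has $f$-value $0$ --- and this is the $G'$ returned, so $G'\in m(f)$. The only thing needed from outside the algorithm now is $\delta'(G')\ge\delta'(f)$, which is immediate from the definition $\delta'(f)=\min_{G''\in m(f)}\delta'(G'')$; moreover, by the choice of $G$ and the line $k\gets\delta'(G)$, we have $\delta'(f)=\delta'(G)=k$. Thus every non-isolated vertex of $G'$ has degree at least $k$.

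The key point is that each $v_j\in I(G)$ stays isolated in $G'$. Indeed, in $H$ the vertex $v_j$ meets no edge of $G$ and at most the $k-1$ edges $\{u_1,v_j\},\dots,\{u_{k-1},v_j\}$ coming from $S$, so $\deg_H(v_j)\le k-1$. Since $G'\subseteq H$, we get $\deg_{G'}(v_j)\le k-1<k\le\delta'(G')$, which is impossible unless $v_j$ is isolated in $G'$. Hence $G'$ contains no edge meeting any $v_j$, in particular no edge of $S$, so $G'\subseteq H\setminus S=G\setminus R$. As $R$ is a non-empty set of edges of $G$, this gives $G'\subseteq G\setminus R\subsetneq G$, i.e.\ $G'\subsetneq G$. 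Combined with $f(G')=1$ (also from Claim~\ref{valuestays}), this contradicts $G\in m(f)$ and completes Case~1.

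I expect the only slightly delicate part to be the bookkeeping in the first paragraph: identifying the two possible forms of $H$ and confirming $S\cap G=\emptyset$. Everything afterwards is short, the conceptual crux being that minimality forces $\delta'(G')\ge k$ --- precisely the degree threshold that the edges newly attached to the formerly-isolated vertices $v_j$ fail to reach --- so those vertices lose all of their (new) edges in $G'$, and $G'$ is pushed back inside $G\setminus R$.
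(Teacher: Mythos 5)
Your proof is correct and its core argument coincides with the paper's: the newly added edges $S$ are all incident to vertices $v_j\in I(G)$, which acquire degree at most $k-1$ in $H$ (the outer loop only runs to $k-1$), while any minimal $G'$ must have $\delta'(G')\ge \delta'(f)=k$, so each $v_j$ is forced to be isolated in $G'$, whence $G'\subseteq G\setminus R\subsetneq G$. The paper packages the same reasoning through the operator $H_{[k]}$ (showing $H_{[k]}\subseteq G$ throughout and $H_{[k]}\subsetneq G$ at the end, then $G'=G'_{[k]}\subseteq H_{[k]}$), whereas you argue directly about the isolated status of each $v_j$ in $G'$; your route is a notational unpacking of the same idea and is, if anything, slightly more explicit about the form of $H$ at the return point.
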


\begin{proof}
Since $\delta'(G)=k$, we know that $G_{[k]}=G$ (see Definition \ref{positive}). On the otherhand, at each stage the algorithm adds to $H$ only edges to vertices in $I(G)$. However, for every such vertex $v_i \in I(H)$, the degree of $v_i$ \textbf{in} $\mathbf{H}$ remains strictly smaller than $k$ throughout the execution of the algorithm, which means that throughout the execution of the algorithm $H_{[k]} \subseteq G$. Finally, the algorithm removes an edge from $H$ that belongs to $G$, either inside the if-condition or after the for-loop. After this happens, no more changes to $H$ are made, and at this point $H_{[k]} \subsetneq G$. From the previous claim, $f(H)=1$. Since $G'$ is a minimal graph, by our assumption $\delta'(G')\geq k$. Thus, $G'=G'_{[k]} \subseteq H_{[k]} \subsetneq G$, proving the claim.
\end{proof}

\begin{figure} [ht]
    \centering

   \subcaptionbox {$f(H)$ changes outside loop \label{fig:out-loop}}{%
        \includegraphics[width=0.45 \textwidth]{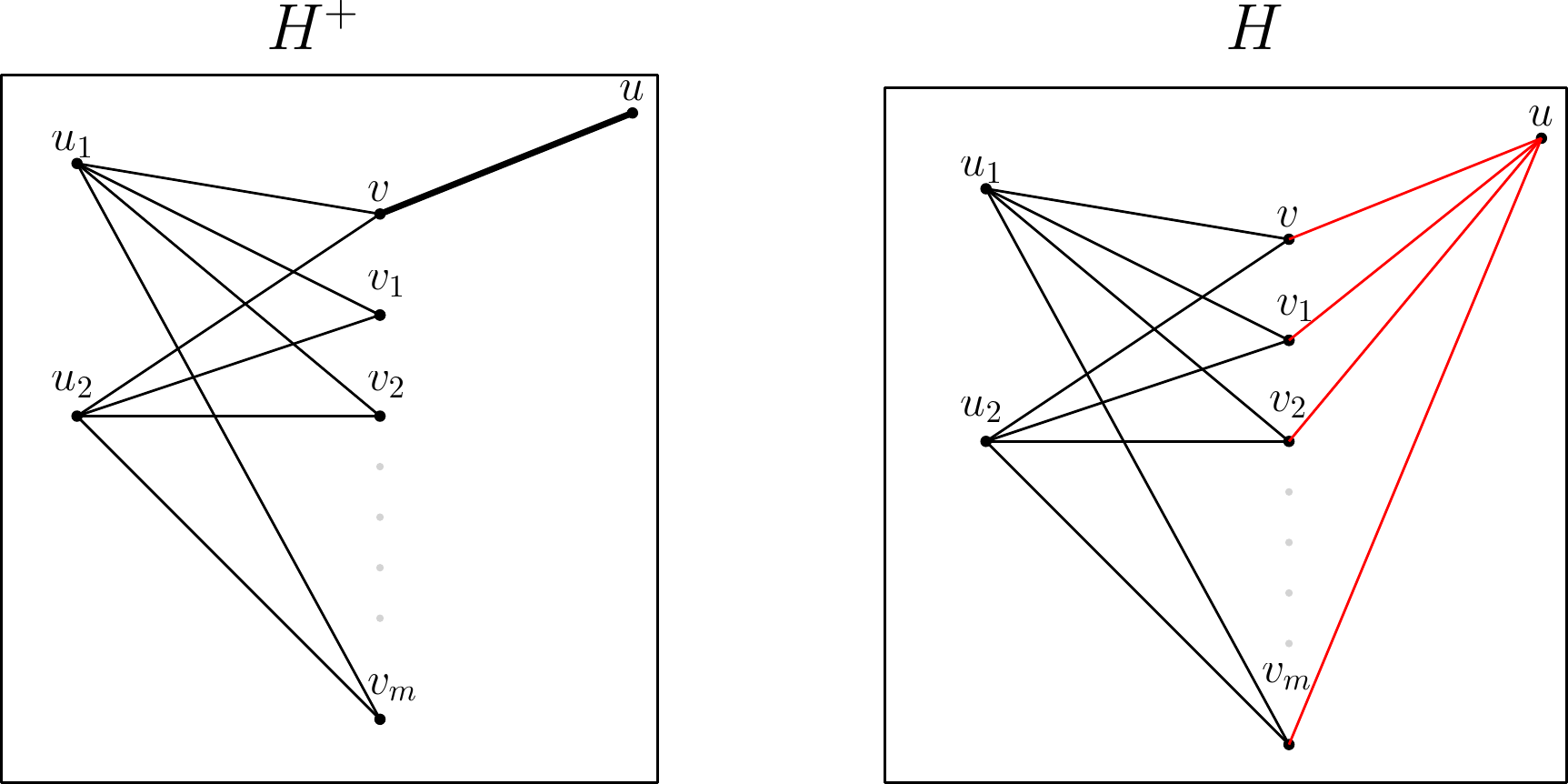}%
}  \par \medskip
      
    \subcaptionbox {$f(H)$ changes inside loop \label{fig:in-loop}}{%
        \includegraphics[width=0.45 \textwidth]{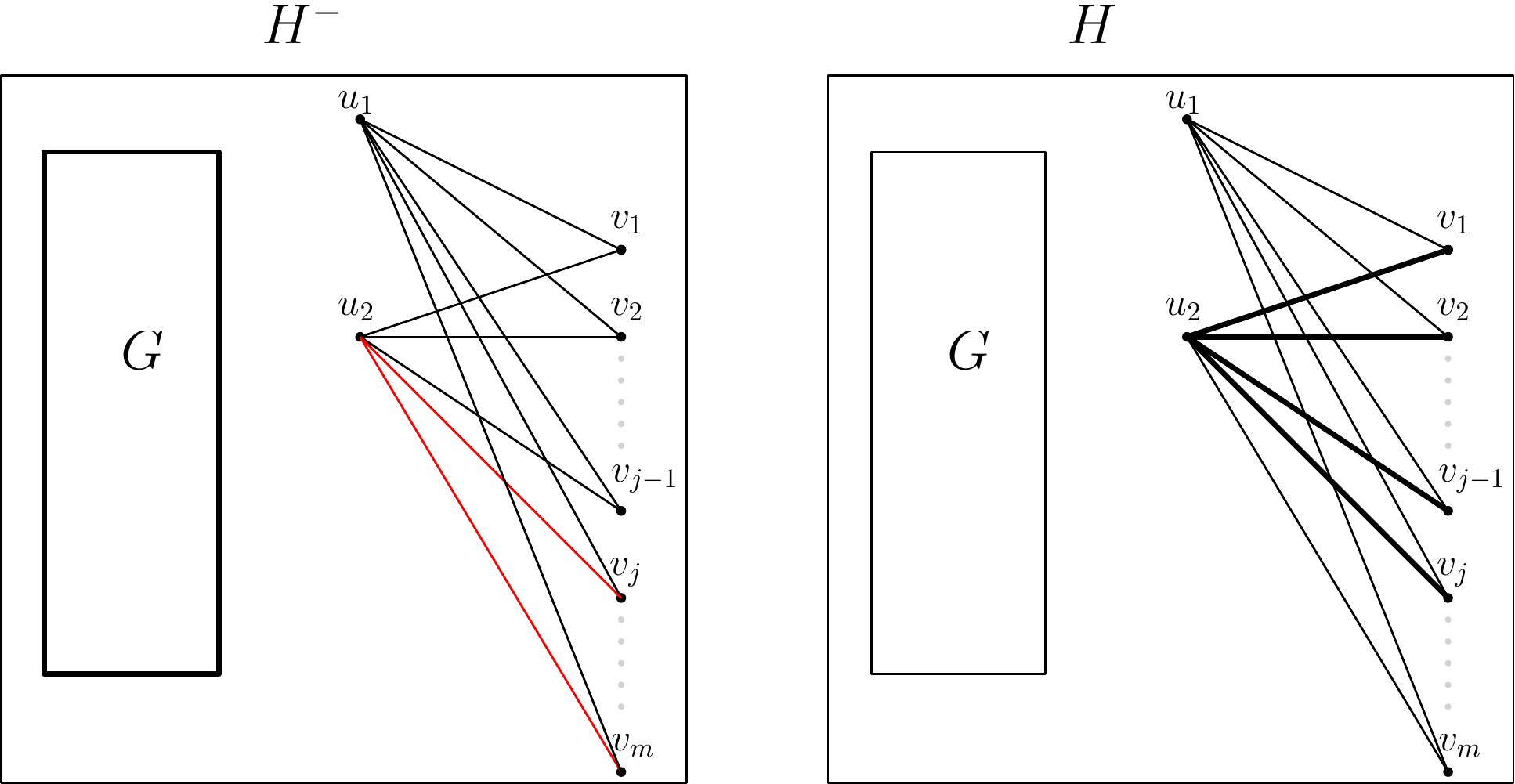}%
}     
\caption{Example with $\delta'(G)=3$. Black lines are edges, thick black lines are sensitive edges, red lines are sensitive non-edges.} 
\label{fig:Case1}
\end{figure}
\pagebreak

\underline{\textbf{Case 2} ($\delta'(G)=1$, $c(G)\geq 2$)}
 \newline \newline
Let $G \in m(f)$ be a graph so that $\delta'(G)=1$, $c(G)=c(f)$.
Denote by $T_1,\dots, T_r$ the connected components of $G$ that are trees but non isolated edges, by $C_1,\dots,C_l$ the connected components that are non-trees, and by $I(G)=\{v_1,\dots, v_m\}$ the set of isolated vertices. We do not assume a-priori that $c(G)<\infty$, that is, that $G$ has any components of the form $T_i$. However, we soon prove that this is indeed the case.

Notice, that for each component $T_i$ the number of edges in $T_i$ is exactly one less than the number of vertices in it, and for each component $C_i$ the number of edges in $C_i$ is at least the number of vertices in it. Summing over all components, we obtain

\begin{equation}
\label{edges2}
|I(G)|+|G|\geq n-r.
\end{equation}

We show that $r \geq 3$. Since $G$ is minimal, and $\delta'(G)=1$, by Lemma \ref{deg1} $|I(G)| \leq s(f)-1 \leq \frac{1}{2}n-2$. Yet, from that same minimality, $G$ is sensitive at every edge it contains, therefore \newline $|G|\leq s(f,G) \leq s(f) \leq \frac{1}{2}n-1$. Adding the two, we deduce $|I(G)|+|G| \leq n-3$, and so $r \geq 3$ by \eqref{edges2}.

From the pigeonhole principle and the minimality of $G$,

\begin{equation} \label{c(G)}
c(G)\leq |G|/r <\frac{1}{2r}n.
\end{equation}

In this particular case, we also assume $c(G)\geq 2$, so combined with \eqref{c(G)} this implies $r< \frac{1}{4}n$.
\\
 We provide an algorithm which has as input a minimal graph $G$ with the properties described above, and outputs a graph $G' \subseteq G$, $f(G')=1$, which is a contradcition to the minimality of $G$.
Recall, the set of isolated edges of $G$ is $I(G)=\{v_1,\dots, v_m\}$.

\begin{algorithm}

\caption{Algorithm for $\delta'(f)=1$, $c(f) \geq 2$}
\begin{algorithmic}
\STATE $H \gets G$
\STATE Find some tree component $T=(V_T,E_T)$ of $G$ with $c(G)=|E(T)|$, and some construction sequence $T^{(1)},\dots, T^{(c(G))}$ of $T$.
\FOR{$i:=1 \to c(G)-1$} 
   \IF {$\exists e \in G$ so that $f(H \setminus e)=f(H)$}
    \STATE $H \gets H \setminus e$
    \RETURN some minimal graph $G' \subset H$ \COMMENT{We prove that such $G'$ must exist}
  \ENDIF
 \STATE  connect vertex $v_{i+1}$ to some vertex $v_j$, $1 \leq j \leq i$, so that the subgraph of $H$ induced on vertices $v_1,\dots, v_{i+1}$ will be isomorphic to $T^ {(i)}$. 
\ENDFOR
\STATE Remove from $H$ some edge $e=\{u,v\}$ in component $T$, so that $(V_T\setminus \{u\},E_T \setminus \{e\})$ is isomorphic to $T^{(c(G)-1)}$.
\RETURN some minimal graph $G' \subset H$ \COMMENT{We prove that such $G'$ must exist}
\end{algorithmic}
\end{algorithm}

\begin{claim} 
\label{value_stays}
At each stage in the algorithm $f(H)$=1. Thus, $f(G')$=1.
\end{claim}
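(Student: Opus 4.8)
The plan is to follow the proof of Claim~\ref{valuestays} almost verbatim. I would assume for contradiction that $f(H)$ does not remain $1$ throughout, and let $H$ denote the graph in the algorithm immediately after the first moment its value drops to $0$. This first drop cannot occur inside the if-branch, since an edge $e$ is removed there only once $f(H\setminus e)=f(H)$ has been certified. So it occurs either (a) while some vertex $v_{i+1}$ is being joined to $v_j$ inside the for-loop, or (b) while the edge $e$ of the component $T$ is being deleted after the loop. I would handle these two possibilities separately, in each one producing more than $\frac{1}{2}n$ sensitive coordinates at a suitable graph, contradicting $s(f)<\lfloor\frac{n}{2}\rfloor$.

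For possibility (a): let $H^-$ be the graph just before $\{v_{i+1},v_j\}$ is added, so $f(H^-)=1$ and $H=H^-\cup\{v_{i+1},v_j\}$. No edge has been removed so far, so $E(G)\subseteq E(H^-)$, and since the if-condition failed on $H^-$, the graph $H^-$ is sensitive at all $|G|$ edges of $G$. Moreover $v_j$ was chosen so that attaching a fresh leaf there turns the copy of $T^{(i-1)}$ on $\{v_1,\dots,v_i\}$ into $T^{(i)}$; hence $H^-\cup\{v_\ell,v_j\}\cong H$ for every isolated vertex $v_\ell$ with $\ell\ge i+1$, giving $m-i$ further sensitive non-edges, which are not edges of $G$. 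Therefore $s(f,H^-)\ge |G|+(m-i)\ge (n-r)-i$ by \eqref{edges2}, and since $r<\frac{1}{4}n$ and $i\le c(G)-1<\frac{1}{2r}n\le\frac{1}{6}n$ by \eqref{c(G)} and $r\ge3$, we get $s(f,H^-)>\frac{1}{2}n$.

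For possibility (b): let $H^+=H\cup e$ be the graph just before $e=\{u,v\}$ is deleted, so $f(H^+)=1$ and $f(H)=0$; by the choice of $e$ the vertex $u$ is a leaf of $T$, so $P:=T\setminus u$ is a copy of $T^{(c(G)-1)}$. Then $H$ consists of the two tree components $P$ and $Q$ (the copy of $T^{(c(G)-1)}$ built by the algorithm on $v_1,\dots,v_{c(G)}$), the $m-c(G)+1$ isolated vertices $u,v_{c(G)+1},\dots,v_m$, and the remaining components of $G$; whereas $H^+$ has one component isomorphic to $T$, one isomorphic to $T^{(c(G)-1)}$, only $m-c(G)$ of those isolated vertices, and the same remaining components. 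Since $T$ arises from $T^{(c(G)-1)}$ by attaching a single leaf, there is at least one vertex of $P$ (for example $v$) and at least one vertex of $Q$ at which attaching a new leaf produces a copy of $T$; carrying this out on $P$ or on $Q$ using any of the $m-c(G)+1$ isolated vertices of $H$ yields a graph isomorphic to $H^+$. These $2(m-c(G)+1)$ non-edges are pairwise distinct, so $s(f,H)\ge 2(m-c(G)+1)$, and I would then check $2(m-c(G)+1)>\frac{1}{2}n$ using $m\ge n-r-|G|$ from \eqref{edges2}, the pigeonhole bound $c(G)\le |G|/r$ from \eqref{c(G)}, the minimality bound $|G|<\frac{1}{2}n$, and $3\le r\le|G|/2$: the extremal case is $r$ of order $|G|/2$, where $r+|G|+|G|/r\le\frac{3}{2}|G|+O(1)<\frac{3}{4}n+O(1)$, enough for large $n$.

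Having shown $f(H)=1$ at every stage, and in particular for the graph $H$ that is returned, the assertion $f(G')=1$ follows since $f(H)=1$ guarantees a subgraph-minimal $G'\subseteq H$ with $f(G')=1$, and such a $G'$ is by definition a minimal graph. The step I expect to be the real obstacle is the arithmetic in possibility (b): the harvest of sensitive coordinates there is only $2(m-c(G)+1)$, and pushing it past $\frac{1}{2}n$ genuinely needs the pigeonhole relation $c(G)\le|G|/r$ rather than the cruder $r<\frac{1}{4}n$, because when $r$ is small $c(G)$ can be linear in $n$; one has to balance $r$ against $c(G)$ rather than bound them separately.
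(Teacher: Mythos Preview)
Your proposal is correct and follows essentially the same route as the paper: the same two-case split, the same sensitive-edge harvests ($|G|+(m-i)$ at $H^-$ in case (a), and $2(m-c(G)+1)$ at $H$ in case (b)), and the same ingredients \eqref{edges2}, \eqref{c(G)}, $|G|<\tfrac{n}{2}$, $3\le r<\tfrac{n}{4}$ to finish. The only visible difference is bookkeeping in case (a): the paper keeps the coupled bound and shows $r+\tfrac{n}{2r}>\tfrac{n}{2}+2$ fails on $3\le r\le n/4$, whereas you decouple via $r<\tfrac{n}{4}$ and $i<\tfrac{n}{6}$ to get $n-r-i>\tfrac{7}{12}n$; both are valid and yours is a little quicker.
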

\begin{proof} 
Assume that at some stage in the algorithm $f(H)=0$, and let us take the first time that this happens. This stage can either occur inside the for-loop, or after the for-loop, while removing edge $e=\{u,v\}$.  Assume the former. The   value of $f(H)$ can not change inside the if-condition, because the if-condition only holds if edge $e$ could be removed from $H$ without changing the value of $f$ in $H$.

 From this it follows that the change occurs at some iteration, $i$, when adding the edge $\{v_{i+1},v_j\}$ for some $1 \leq j \leq i$. Denote by 
$H^-=H\setminus \{v_{i+1},v_j\}$ the graph just before this edge was removed.  $f(H^-)=1$. Furthermore, no edge $e \in G$ could be removed from $H^-$ without changing the value of the function, otherwise the if-condition in the beginning of the $i$th iteration would hold. Hence, $H^-$ is sensitive on all $|G|$ edges that belong to $G$. In addition to that, $H$ is isomorphic to $H^- \cup \{v_j,v_l\}$ for every $i+1 \leq l \leq m$. Thus, adding any one of these edges to $H^-$ changes the value of $f$ on the graph. Together, we see that \newline $s(f)\geq s(f,H^-)\geq m-i+|G|\geq m-c(G)+1+|G|$. From \eqref{edges2} and our assumption $s(f)\leq \frac{1}{2}n-1$, we obtain

\begin{equation}
\label{impossible1}
\frac{1}{2}n \geq n-r-c(G)+2,
\end{equation}

and recalling \eqref{c(G)}, we deduce

\begin{equation}
\frac{1}{2}n+2 <r+\frac{n}{2r}.
\end{equation}
This inequality does not hold for any $3 \leq r \leq n/4$, and $n$ sufficiently large.\\

Next, assume that $f(H)$ becomes zero for the first time after the loop, while removing edge $\{u,v\}$. 
Denote by $H^+=H \cup \{u,v\}$, the graph just before removing edge $\{u,v\}$ and changing the value of the function. The following facts are of importance to us:
\begin{itemize}

\item $H$ contains two connected components, which are isomorphic to $T^{(c(G))-1}$. Namely, one copy on vertices $v_1,\dots,v_{c(G)}$ with edges added by the algorithm, and one copy formed from $T$ by removing edge $\{u,v\}$. Call these copies $T_1$ and $T_2$, respectively.
\item $I(H)=\{v_{c(G)+1},\dots, v_m,u\}$, so 
\begin{equation}
\label{ind}
|I(H)|=m-c(G)+1.
\end{equation}
\end{itemize}

The graph $H^+$ is isomorphic to any graph of the form $H \cup \{v,w \}$,
$w \in I(H)$. Additionally, there is some vertex $v_j \in V(T_1)$, $1 \leq j \leq c(G)$, so that for any $w \in I(H)$, $H \cup \{v_j,w\}$ is isomorphic to $H^+$.
This is true, because $T_1$ and $T_2$ are isomorphic. From all of the above, it follows that $\frac{1}{2}n>s(f,H)\geq 2|I(H)|$. With \eqref{c(G)} and \eqref{ind}, this implies
\begin{equation}
\label{sens bound}
\frac{1}{2}n> 2(m-c(G)+1)\geq 2(m-|G|/r +1)
\end{equation}

From equation \eqref{edges2}, we know that $m\geq n-|G|-r$. Plugging this into \eqref{sens bound}, using  $|G| \leq \frac{1}{2}n-1$, we have

\begin{equation}
\label{Impossible}
\frac{1}{4}n>n-r-(\frac{1}{2}n-1)(1+\frac{1}{r})+1.
\end{equation}

Reogranizing, using $r \geq 3$:

\begin{equation}
r-2>\frac{1}{4}n(1-\frac{1}{2r}),
\end{equation}

or

\begin{equation}
r>n/4,
\end{equation}

which we have shown is impossible. Thus, at every stage of the algorithm $f(H)=1 $ (see Figure ~\ref{fig:outside-loop2}), and consequentially $f(G')=1$.
\end{proof}

\begin{claim}
\label{contained2}
$G' \subsetneq G$
\end{claim}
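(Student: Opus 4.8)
The plan is to mirror the proof of Claim~\ref{contained1} from Case~1, adapting it to the tree-based algorithm of Case~2. I would argue by tracking the quantity $c(H)$, the positive minimum tree component size, throughout the execution of the algorithm, just as Case~1 tracked $\delta'(H)$ relative to the threshold $k=\delta'(G)$. Set $k=c(G)=c(f)$. Since $G\in m(f)$ is chosen with $c(G)=c(f)$, every minimal graph $G''$ satisfies $c(G'')\ge k$, and in particular $G'_{(k)}=G'$ for the output graph $G'$. So it suffices to show that at the moment the algorithm halts we have $H_{(k)}\subsetneq G$, whence $G'=G'_{(k)}\subseteq H_{(k)}\subsetneq G$.

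First I would observe that throughout the for-loop the algorithm only adds edges among the vertices $v_1,\dots,v_{c(G)}$, building up an isomorphic copy of $T^{(i)}$ on $v_1,\dots,v_{i+1}$ at step $i$. Since this partial copy always has at most $c(G)-1$ edges and fewer than $c(G)$ vertices carrying edges, it forms a tree component of $H$ with strictly fewer than $c(G)=k$ edges — hence it is \emph{not} counted toward $c(H)$ in the wrong direction; more precisely, this new component has size $<k$, so $H_{(k)}$ never contains it, and every other component of $H_{(k)}$ is already a component of $G$. Thus throughout the for-loop $H_{(k)}\subseteq G$. The same holds trivially at initialization since $H=G$ gives $H_{(k)}=G_{(k)}=G$ (as $c(G)=k$).

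Next I would handle the two ways the algorithm can halt. If it halts inside the if-condition, it has just removed an edge $e\in G$ from $H$. If it halts after the for-loop, it removes an edge $e=\{u,v\}$ lying in the component $T$, which is again an edge of $G$. In both cases an edge of $G$ is deleted from $H$ and no further modifications occur. After this deletion, either the component $T$ of $G$ is destroyed (its tree structure broken or split), or — in the post-loop case — $T$ is turned into a copy of $T^{(c(G)-1)}$, a tree component with only $c(G)-1=k-1$ edges, which is too small to be counted toward $c(H_{(k)})$. Either way, at least one edge that belonged to $G_{(k)}=G$ is no longer present in $H_{(k)}$, so $H_{(k)}\subsetneq G$. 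Combining with $f(H)=1$ from Claim~\ref{value_stays} and the fact that $G'$ is a minimal graph inside $H$ (so $c(G')\ge c(f)=k$ and hence $G'=G'_{(k)}\subseteq H_{(k)}$), we conclude $G'\subsetneq G$, contradicting the minimality of $G$.

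The main obstacle I anticipate is making the claim ``$H_{(k)}\subseteq G$ throughout'' fully precise, since $G_{(k)}$ is defined as the \emph{maximal} subgraph with $c(\,\cdot\,)\ge k$ and one must check that adding a small tree component to $H$ cannot accidentally enlarge $H_{(k)}$ — this requires verifying that no edge outside $G$ ever enters $H_{(k)}$, which follows because the only edges the algorithm adds outside $G$ sit in a component of size $<k$, and the definition of $G_{(k)}$ discards all tree components of size $<k$. A secondary subtlety is the halting case inside the if-condition: one must note that the removed edge $e$ is required by the algorithm to satisfy $e\in G$, so that deleting it genuinely shrinks the part of $H$ coming from $G$; this is exactly parallel to Case~1 and causes no real difficulty.
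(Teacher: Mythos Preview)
Your proposal is correct and follows essentially the same approach as the paper: set $k=c(G)=c(f)$, observe that the edges in $H\setminus G$ always form a tree component with fewer than $k$ edges so that $H_{(k)}\subseteq G$ throughout, note that the final modification removes an edge of $G$ so that $H_{(k)}\subsetneq G$ at termination, and conclude $G'=G'_{(k)}\subseteq H_{(k)}\subsetneq G$ using $c(G')\ge c(f)=k$. The paper's version is slightly more streamlined in that it does not split into cases on where the removed edge $e$ lies (your remark about ``the component $T$ of $G$ is destroyed'' is unnecessary, since in the if-condition case $e$ need not belong to $T$ at all --- the only fact needed is $e\in G$ and $e\notin H$, hence $e\notin H_{(k)}$).
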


\begin{proof}
Let $c(G)=k$. Notice that throughout the algorithm, $H_{(k)} \subseteq G=G_{(k)}$, because all edges which are in $H\setminus G$ form a tree component in $H$ with less than $k$ edges. The last change the algorithm does to $H$, is removing an edge from $H$ that is also an edge of $G$. Thus, after this edge is removed and the algorithm does not change $H$ anymore, it is the case that $H_{(k)} \subsetneq G$, hence also $G'_{(k)} \subsetneq G$. Recall our choice of $G$: $G \in m(f)$ and $k=c(G)=c(f)$. $G' \in m(f)$ as well, so of course $c(G')\geq c(f)=k$, and therefore $G'=G'_{(k)} \subsetneq G$, proving the claim
\end{proof}

\begin{figure} [ht] 
\centering
\vspace{0.5cm}
\includegraphics[width=0.4\textwidth]{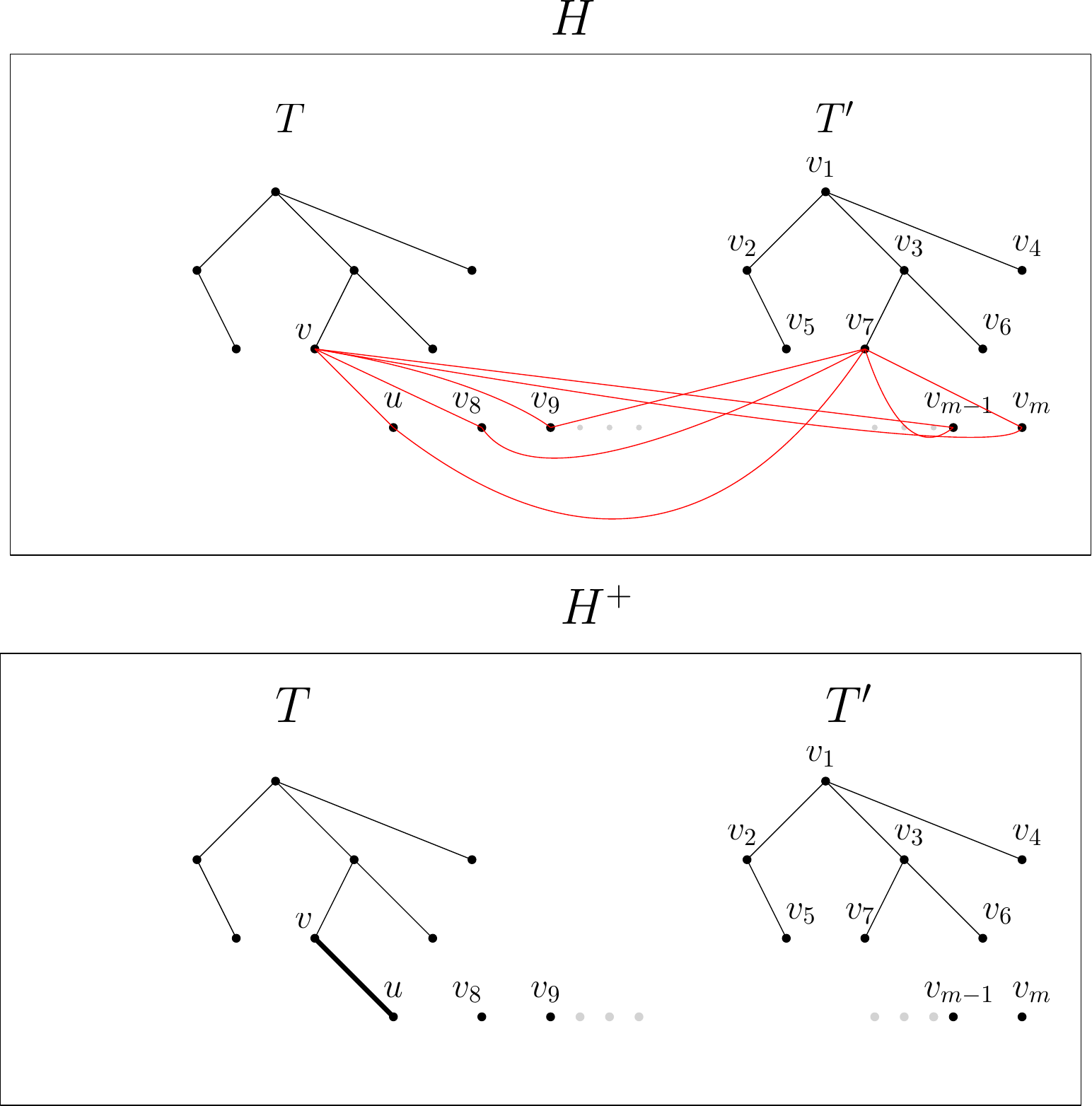}
\caption{Example with $c(G)=7$, $f(H)$ changes outside loop. Black lines are edges, thick black lines are sensitive edges, red lines are sensitive non-edges. } \label{fig:outside-loop2}
\end{figure}
\vspace{0.5cm}

\underline{\textbf{Case 3} ($c(f)=1$)}
\newline \newline
Choose a graph $G\in m(f)$, so that $\delta'(G)=1$, $c(G)=1$.  Let $E_1,\dots,E_m$ be its connected components with one edge, $C_1,\dots, C_l$ the connected components with at least $2$ edges, and $I(G)=\{w_1,\dots, w_r\}$  the set of isolated vertices. The following structural lemma will be useful:

\begin{lemma} 
\label{mingraphstructure}
For the graph $G$ above:
\begin{enumerate}
\item $r \geq 2$
\item $r=o(n)$
\item $m=\frac{1}{2}n-o(n)$ and $l=o(n)$

\end{enumerate}
\end{lemma}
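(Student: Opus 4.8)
The plan is to mimic the counting strategy already used in Cases 1 and 2, combined with the degree-one lemma (Lemma \ref{deg1}), and then push it further to pin down the orders of $r$, $m$, and $l$. Throughout, set $N=|G|$ for the edge count and note $r=|I(G)|$. The two global facts we have are: since $G$ is minimal, it is sensitive at each of its $N$ edges, so $N=|G|\le s(f,G)\le s(f)<\lfloor n/2\rfloor$; and since $\delta'(G)=1$, $G$ has a degree-one vertex, so by Lemma \ref{deg1} (whose conclusion $f(G)=f(G\setminus e)$ is impossible by minimality) we get $r=|I(G)|\le s(f)-1\le n/2-2$. Finally, counting edges versus vertices over the three types of components, each $E_i$ contributes $1$ edge and $2$ vertices, each $C_j$ contributes at least as many edges as vertices, and each isolated vertex contributes $0$ edges and $1$ vertex; summing gives $2m+N'+r\le n$ where $N'$ is the number of edges in the $C_j$'s, hence in particular $2m+r\le n$, so $m\le n/2$.

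For part (1), $r\ge 2$: if $r\le 1$, then the bound $N=|G|\le s(f)$ on one hand, and a sensitivity argument on $G\setminus e$ for $e$ the edge at a degree-one vertex on the other, should collide with the component count $2m+N'+r\le n$ much as in Case 2's derivation of $r\ge 3$. Concretely, from $N\le n/2-1$ and $r\le 1$ we'd get $2m+N'\ge n-1$; but $N=m+N'$ so $m=N-N'\le n/2-1-N'$, giving $2(n/2-1-N')+N'\ge n-1$, i.e. $-2-N'\ge -1$, i.e. $N'\le -1$, a contradiction. (One must double-check the exact slack from Lemma \ref{deg1} to confirm $r\ge 2$ rather than just $r\ge 1$; I expect the $+1$ in that lemma gives it.) For part (2), $r=o(n)$: here the key is that $c(G)=1$ means $G$ has a component that is a single edge, i.e. an isolated edge $E_1=\{u,v\}$ with $u,v$ of degree one. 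Running the analogue of the Case-2 algorithm — but now the "construction sequence" has length $1$, so essentially we just delete $E_1$ after first observing that $G\cup\{u,w\}$ is isomorphic to $G$ for every isolated vertex $w$, and similarly $G\cup\{v,w\}$ — shows $G\setminus E_1$ is sensitive at roughly $2r$ non-edges plus the $N$ edges, which after cancellation against $2m+N'+r\le n$ forces $r=o(n)$; in fact I'd expect to extract $r=O(1)$ or at worst $r\le$ some slowly-growing function, but $o(n)$ is all that's claimed. Part (3) then follows by bookkeeping: $2m+r+N'\le n$ together with $r=o(n)$ and $m\ge$ (something like) $n/2-o(n)$, the lower bound on $m$ coming from the fact that $N=m+N'<n/2$ forces $m+N'<n/2$, while $2m+N'\ge n-r=n-o(n)$; subtracting gives $m\ge n/2-o(n)$, hence $m=n/2-o(n)$, and then $N'\le n/2-m=o(n)$, and since each $C_j$ has at least $2$ edges (in fact at least as many edges as vertices, and at least $3$ vertices if it contains a cycle, or $\ge 2$ edges as a larger tree) we get $l\le N'/2=o(n)$.

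The main obstacle I anticipate is part (2): getting $r=o(n)$ rather than merely $r<$ (constant fraction of $n$). The crude counting above only yields linear-in-$n$ bounds on $r$; to get $o(n)$ one presumably has to iterate the "attach an isolated vertex to the isolated edge" move and track how the sensitivity budget of $n/2-1$ is spent across many isolated vertices simultaneously — i.e. argue that if $r$ were $\Omega(n)$ then some intermediate graph $H$ in the process would be sensitive at more than $n/2$ coordinates (all the edges of $G$, plus $\Omega(n)$ isomorphic re-attachments), contradicting $s(f)<n/2$. The bookkeeping in part (3) is then routine once (1) and (2) are in hand, essentially just rearranging $2m+r+N'\le n$ and $m+N'=|G|<n/2$.
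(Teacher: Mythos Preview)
Your identification of part (2) as the main obstacle is correct, and this is exactly where the proposal has a genuine gap. Your plan produces only a linear number of sensitive coordinates at $G\setminus E_1$ (roughly $2r$ re-attachments), which as you yourself note yields only $r<cn$ for some constant $c<1$, not $r=o(n)$. The missing observation is that after deleting the isolated edge $E_1=\{u,v\}$, the graph $G^-:=G\setminus E_1$ has $r+2$ isolated vertices, and adding \emph{any} edge between two of them produces a graph isomorphic to $G$. Hence $G^-$ is sensitive at $\binom{r+2}{2}$ non-edges, giving $\binom{r+2}{2}\le s(f)<n/2$ and thus $r=O(\sqrt{n})=o(n)$ in one stroke. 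No iteration or Case-2-style algorithm is needed. Note also two smaller slips in your sketch of this part: the graph $G\cup\{u,w\}$ (with $u\in E_1$, $w$ isolated) is \emph{not} isomorphic to $G$ --- it has a two-edge path component; what is isomorphic to $G$ is $(G\setminus E_1)\cup\{u,w\}$. And $G\setminus E_1$ is not known to be sensitive at ``all $N$ edges of $G$''; minimality only tells you it is sensitive at $E_1$ itself.

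For part (1) your counting argument contains an error: the components $C_j$ are defined merely as the connected components with at least two edges, and these can be trees (e.g.\ a two-edge path), so ``each $C_j$ contributes at least as many edges as vertices'' is false in general, and the inequality $2m+N'+r\le n$ is wrong both in the premise and in the direction (you later use it as $2m+N'\ge n-1$). The paper's argument is a one-liner that sidesteps the component decomposition entirely: any graph has at most $2|G|$ non-isolated vertices, so $r\ge n-2|G|\ge n-2(\lfloor n/2\rfloor-1)\ge 2$. Your part (3) bookkeeping is essentially the paper's, except that in place of $\sum|V(C_j)|\le N'$ (which fails for tree $C_j$) one uses $|V(C_j)|\le |C_j|+1\le \tfrac{3}{2}|C_j|$ (valid since $|C_j|\ge 2$), and then the same subtraction yields $\sum_j|C_j|=o(n)$, hence $m=\tfrac{1}{2}n-o(n)$ and $l\le \tfrac{1}{2}\sum_j|C_j|=o(n)$.
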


\begin{proof} 
\begin{enumerate}

\item Since $G$ is minimal, $|G| \leq s(f,G) \leq \frac{1}{2}n-1$. In any graph, the number of non-isolated vertices is at most twice the number of edges. Thus, $r \geq n-2|G|\geq n-(n-2)=2$.

\item Let $\{u,v\} \in E(G)$ be an edge which connects two vertices of degree one. Remove this edge, to obtain graph $G^-=G \setminus \{u,v\}$. By minimality of $G$, $f(G^-)\neq f(G)$.  $|I(G^-)|=r+2$, and adding any edge between two isolated vertices in $G^-$ will create a graph isomorphic to $G$, and thus change the value of the function. Thus, ${r+2 \choose 2} \leq s(f,G^-) \leq s(f) \leq \frac{1}{2}n$, so $r=o(n)$.

\item Notice, that for each component $E_i$, $i \in [m]$, there are exactly two vertices in the component. Thus, the total number of vertices in these components is $2m$. For any component $C_i$, $i \in [l]$, $|V(C_i)| \leq |C_i|+1 \leq \frac{3}{2}|C_i|$, because $|C_i| \geq 2$. So we can write the following inequalites:

\begin{equation}
\label {edges}
\frac{1}{2}n-1 \geq |G|=m+\sum_{i=1}^l |C_i|
\end{equation}

\begin{equation}
\label{vertices} 
n=r+2m+\sum_{i=1}^l (|C_i|+1) \geq o(n)+2m+ \frac{3}{2}\sum_{i=1}^{l}|C_i|.
\end{equation}

taking $4\cdot \eqref{edges} -2\cdot \eqref{vertices}$, we obtain

$\sum_{i=1}^l |C_i|=o(n)$, and thus $m=\frac{1}{2}n-o(n)$, proving $3$.
\end{enumerate}
\end{proof}

Our method of proof will be to obtain two isomorphic graphs $H$ and $H'$, for which $f(H) \neq f(H')$, which is impossible since $f$ is a graph property. Let $E_i=\{v_i,u_i\}$, for $1 \leq i \leq m$. We define $H=G\cup \{u_1,u_2\} \cup  (\bigcup_{i=3}^{\lceil n/6 \rceil+1}\{v_1,v_i\}$. Next, we define $H'=H \cup \{v_1,v_2\} \setminus\{u_1,v_1\}$. Notice, that $H'$ is isomorphic to $H$, since the function $\pi:V(H') \to V(H)$ given by $\pi(v_2)=u_1$, $\pi(u_1)=v_2$, $\pi(x)=x$ for all $x\in V\setminus \{v_2,u_1\}$ is a graph isomorphism.
So to finish the proof, we just need to prove $f(H) \neq f(H')$.

\begin{claim}
\label{H=1}
$f(H)=1$.
\end{claim}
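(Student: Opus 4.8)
The plan is to realize $H$ as the terminal graph of a chain of single-edge additions $G=G^{(0)}\subsetneq G^{(1)}\subsetneq\cdots\subsetneq G^{(t_0)}=H$ in which we first add $\{u_1,u_2\}$ and then add $\{v_1,v_3\},\{v_1,v_4\},\dots,\{v_1,v_{\lceil n/6\rceil+1}\}$ one at a time, so that $t_0=\lceil n/6\rceil$. Since $G\in m(f)$ we have $f(G^{(0)})=1$, so by induction it suffices to prove that $f(G^{(s-1)})=1$ forces $f(G^{(s)})=1$ for every $s$; the conclusion $f(H)=1$ then follows.

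Suppose instead that $f(G^{(s-1)})=1$ but $f(G^{(s)})=0$, where $G^{(s)}=G^{(s-1)}\cup\{e\}$; then $G^{(s-1)}$ is sensitive at the non-edge $e$. Let $z$ denote the endpoint $u_1$ of $e$ when $e=\{u_1,u_2\}$, and the endpoint $v_1$ of $e$ otherwise. In either case the other endpoint of $e$ lies in a single-edge component $E_a=\{v_a,u_a\}$ of $G$ which has not been touched in the chain up to stage $s-1$. The key point is that the single-edge components $E_j=\{v_j,u_j\}$ of $G$ with $\lceil n/6\rceil+2\le j\le m$ — there are $m-\lceil n/6\rceil-1$ of them — are never modified anywhere in the chain, and for each such $E_j$, replacing $E_a$ by $E_j$ as the attachment point of the new edge at $z$ yields a graph isomorphic to $G^{(s)}$; since the two vertices of a single-edge component are interchangeable, the endpoint at $z$ may moreover be sent to either $v_j$ or $u_j$. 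Hence both $\{z,v_j\}$ and $\{z,u_j\}$ are non-edges of $G^{(s-1)}$ whose addition flips the value of $f$, that is, sensitive coordinates of $G^{(s-1)}$; and as $j$ ranges over these $m-\lceil n/6\rceil-1$ indices the resulting $2(m-\lceil n/6\rceil-1)$ coordinates are pairwise distinct and distinct from $e$. Thus $s(f)\ge s(f,G^{(s-1)})\ge 2(m-\lceil n/6\rceil-1)$. By part $3$ of Lemma~\ref{mingraphstructure}, $m=\frac{1}{2}n-o(n)$, so the right-hand side equals $\frac{2}{3}n-o(n)$, which exceeds $\lfloor\frac{1}{2}n\rfloor$ once $n$ is large — contradicting the standing assumption $s(f)<\lfloor\frac{1}{2}n\rfloor$. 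This completes the induction.

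I expect the only real work to be the routine verification of the claimed isomorphisms $G^{(s-1)}\cup\{z,v_j\}\cong G^{(s)}\cong G^{(s-1)}\cup\{z,u_j\}$ — immediate, since outside the single component being assembled around $v_1$ (together with $E_1$ and $E_2$) the graph never changes, so one is only relabelling one untouched pendant edge by another — and the bookkeeping that at most $\lceil n/6\rceil+1$ of the components $E_i$, namely $E_1,E_2$ and $E_3,\dots,E_{\lceil n/6\rceil+1}$, are ever altered. There is no genuine obstacle here: the one inequality that matters, $2(m-\lceil n/6\rceil-1)>\lfloor\frac{1}{2}n\rfloor$, holds with room to spare precisely because only about $n/6$ edges are added to a graph with $m\approx\frac{1}{2}n$ single-edge components. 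The substantive difficulty in Case $3$ lies in the companion claim that $f(H')=0$, not in this one.
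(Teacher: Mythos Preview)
Your proposal is correct and follows essentially the same approach as the paper: build the same chain from $G$ to $H$ by adding $\{u_1,u_2\}$ and then the edges $\{v_1,v_i\}$ one at a time, and at a hypothetical first flip exhibit many non-edges at $z$ into untouched single-edge components whose addition yields graphs isomorphic to the flipped graph, giving $s(f,G^{(s-1)})\ge \tfrac{2}{3}n-o(n)$. The only cosmetic differences are that the paper treats the very first step via a separate $\binom{m}{2}$-argument while you handle it uniformly, and the paper counts all currently untouched $E_j$ (getting $2(m-i+1)$) whereas you restrict to the $E_j$ with $j\ge\lceil n/6\rceil+2$; both yield the same asymptotic contradiction.
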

\begin{proof}
Construct $H$ from $G$ in the following way. First, add edge $\{u_1, u_2\}$, and then add the edges $\{v_1,v_i\}$ one by one, starting from $i=3$ and increasing $i$ by one each time. In this way we obtain a sequence of graphs $G:=G_1,G_2,G_3,\dots, G_{\lceil n/6 \rceil+1}=H$, so that $G_2=G \cup \{u_1,u_2\}$, and for \newline $3 \leq i \leq \lceil n/6 \rceil +1$, $G_i=G_{i-1} \cup \{v_1,v_i\}$.

 Notice that $G_2$ is isomorphic to $G \cup \{u_i,u_j\}$ for any $1 \leq i<j \leq m$, and so if $G$ was sensitive for $\{u_1,u_2\}$ it would be sensitive for $\Omega(m^2)=\Omega(n^2)$ edges, which is impossible for sufficiently large $n$. So $f(G_1)=1$. Assume that $G_i$ is the first graph in the sequence for which $f(G_i)=0$,  for some $3 \leq i \leq \lceil n/6 \rceil+1$. That is, $G_{i-1}$ is sensitive on edge $\{v_1,v_i\}$.

 However, all graphs of the form $G_{i-1}\cup \{v_1,v_j\}$ and $G_{i-1} \cup \{v_1,u_j\}$ are isomorphic to $G_i$, for $i \leq j \leq m$. Hence, $s(f,G_{i-1})\geq 2(m-i+1) \geq 2(m-\lceil n/6\rceil)=\frac{2}{3}n-o(n)>\frac{1}{2}n$, for sufficiently large $n$. This contradicts our assumption $s(f)\leq \frac{1}{2}n-1$. Hence, $f(H)=f(G)=1$.
\end{proof}

\begin{claim}
\label{H'=0}
$f(H')=0$.
\end{claim}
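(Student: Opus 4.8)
The plan is to establish $f(H')=0$ directly, by exhibiting $H'$ as the final term of a sequence of graphs that begins at a graph on which $f$ vanishes (by minimality of $G$) and along which $f$ never changes value. Since $E_1=\{v_1,u_1\}\in E(G)$ and $H'$ contains every edge of $G$ except $E_1$, we may write
\[
H'=(G\setminus E_1)\cup\{v_1v_2,\,v_1v_3,\dots,v_1v_{\lceil n/6\rceil+1},\,u_1u_2\},
\]
and $f(G\setminus E_1)=0$ because $G\setminus E_1\subsetneq G$ and $G\in m(f)$. First I would add the ``star edges'' $\{v_1,v_j\}$ one at a time for $j=2,3,\dots,\lceil n/6\rceil+1$, and only then add the last edge $\{u_1,u_2\}$; the resulting graph is exactly $H'$. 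The claim to prove, by induction along this sequence, is that $f$ is identically $0$ on it; equivalently, if $k$ is the first step at which the value turns to $1$, one derives a contradiction with the standing hypothesis $s(f)<\lfloor\frac{1}{2}n\rfloor$.

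Consider a step adding a star edge $\{v_1,v_j\}$. Let $A$ be the current graph: in $A$, the vertex $v_1$ is the center of a star with $t\le\lceil n/6\rceil-1$ legs of length two (each leg a path $v_1$--$v_i$--$u_i$ formed from an old isolated component $E_i$), and at least $m-\lceil n/6\rceil$ of the original isolated edges $E_i$ are still untouched. Suppose $f(A)=0$ but $f(A\cup\{v_1,v_j\})=1$. For every endpoint $x$ of an untouched isolated edge --- there are at least $2(m-\lceil n/6\rceil)$ of them --- the graph $A\cup\{v_1,x\}$ is isomorphic to $A\cup\{v_1,v_j\}$, because attaching $v_1$ to one endpoint of an isolated edge always produces the same isomorphism type; hence $f(A\cup\{v_1,x\})=1\neq f(A)$ and $A$ is sensitive at the non-edge $\{v_1,x\}$. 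By Lemma~\ref{mingraphstructure}, $m-\lceil n/6\rceil=\frac{1}{2}n-o(n)-\lceil n/6\rceil>\frac{1}{4}n$ for $n$ large, so $s(f)\ge s(f,A)\ge 2(m-\lceil n/6\rceil)>\lfloor\frac{1}{2}n\rfloor$, a contradiction. Thus $f$ stays $0$ throughout this phase, and after all star edges are inserted we reach
\[
C^{*}=(G\setminus E_1)\cup\{v_1v_2,\dots,v_1v_{\lceil n/6\rceil+1}\},\qquad f(C^{*})=0,
\]
where $v_1$ is the center of a star with $\lceil n/6\rceil$ legs of length two, and the set of isolated vertices of $C^{*}$ is exactly $\{w_1,\dots,w_r,u_1\}$.

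The remaining, and tightest, step adds $\{u_1,u_2\}$, turning the leg through $u_2$ into the length-three path $v_1$--$v_2$--$u_2$--$u_1$. If $f(H')=f(C^{*}\cup\{u_1,u_2\})=1$, then for every isolated vertex $z\in\{w_1,\dots,w_r,u_1\}$ of $C^{*}$ and every leaf $u_i$ ($2\le i\le\lceil n/6\rceil+1$) of a length-two leg, the graph $C^{*}\cup\{z,u_i\}$ is isomorphic to $H'$ (each such addition lengthens exactly one leg to length three), so $C^{*}$ is sensitive at each of these $(r+1)\lceil n/6\rceil$ distinct non-edges. Since $r\ge 2$ by Lemma~\ref{mingraphstructure}, this yields $s(f)\ge s(f,C^{*})\ge 3\lceil n/6\rceil\ge\lfloor\frac{1}{2}n\rfloor$, contradicting $s(f)<\lfloor\frac{1}{2}n\rfloor$; hence $f(H')=0$, which completes Case~3 and the proof. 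I expect the main obstacle to be precisely this last inequality: it uses up the slack afforded by $r\ge 2$ (part~(1) of Lemma~\ref{mingraphstructure}), and is the only place where all the accounting --- the sizes $m=\frac{1}{2}n-o(n)$ and $r=o(n)$, the number $\lceil n/6\rceil$ of legs, and the isolated-vertex count of $C^{*}$ --- has to be simultaneously tight; the only other care needed is in checking the two families of graph isomorphisms invoked above, namely that attaching an isolated vertex to an endpoint of an isolated edge, respectively to the leaf of a length-two leg, always reproduces the isomorphism type of the edge actually being added.
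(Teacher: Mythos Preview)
Your proposal is correct and follows essentially the same route as the paper: start from $G'=G\setminus E_1$ (where $f$ vanishes by minimality), add the star edges $\{v_1,v_j\}$ one at a time, and finally add $\{u_1,u_2\}$; the intermediate steps are handled exactly as in Claim~\ref{H=1}, and the final step uses the three isolated vertices $u_1,w_1,w_2$ of $C^*$ (available since $r\ge 2$) together with the $\lceil n/6\rceil$ leaves $u_j$ to produce $3\lceil n/6\rceil\ge\lfloor n/2\rfloor$ sensitive non-edges. The only cosmetic difference is that you record the full count $(r+1)\lceil n/6\rceil$ before specializing to $r\ge 2$, whereas the paper picks the three isolated vertices directly.
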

\begin{proof}
Since $G$ is minimal, $f(G \setminus \{u_1,v_1\})=0$. Denote by $G'=G \setminus \{u_1,v_1\}$. $u_1,v_1 \in I(G')$, because they both belong to $E_1$ in $G$. Thus, $|I(G')|=r+2 \geq 4$, from $1$ in lemma \ref{mingraphstructure}. Construct $H'$ from $G'$, by first adding edges $\{v_1,v_i\}$ one by one, starting from $i=2$ and increasing $i$ by one each time. Finally, add the edge $\{u_1,u_2\}$.

 In this way we obtain a sequence of graphs $G':=G'_1,G'_2,\dots,G'_{\lceil n/6 \rceil +1}, H'$, so that $G'_i=G'_{i-1} \cup \{v_1,v_i\}$ for every $2 \leq i \leq \lceil n/6 \rceil +1$, and $H=G'_{\lceil n/6 \rceil +1}\cup \{u_1,u_2\}$. From identical consideration as in the previous claim, $f(G_i)=f(G_{i-1})=0$ for all $2 \leq i \leq \lceil n/6 \rceil +1$, since otherwise $G_{i-1}$ would be sensitive for too many edges.

 Thus, we just need to prove $f(H')=f(G_{\lceil n/6 \rceil +1})$. Denote by $H'^-=G'_{\lceil n/6 \rceil +1}$. the only vertex that is isolated in $G'$ but not in $H'^-$ is $v_1$. So let $u_1,w_1,w_2 \in I(H'^-)$. Notice that $H'$ is isomorphic to all graphs of the form $H'^- \cup \{u_1,u_j \}$, $H'^- \cup \{w_1,u_j\}$, $H'^- \cup \{w_2,u_j\}$, for $2 \leq j \leq \lceil n/6 \rceil +1$. This makes a total of $3\lceil n/6 \rceil>\frac{1}{2}n-1\geq s(f)$, so if $f(H'^-)\neq f(H')$, then $s(f,H'^-)>s(f)$, which is impossible. Hence, $f(H')=0$, proving the claim.
\end{proof}

\end{proof}

\section{Concluding Remarks}
Conjecture \ref{Turan} remains unresolved, more than thirty years after it was first formulated. This author remains agnostic about the veracity of the conjecture.

One can of course ask an analogous question for $k$-uniform hypergraphs for any 
fixed integer $k\geq 2$. 
That is, is the minimum sensitivity for any non-trivial $k$-uniform hypergraph property the same as the minimum sensitivity for any non-trivial \textbf{monotone} $k$-uniform hypergraph property. It turns out, though, that for $k>2$ this is false, even asymptotically.

Indeed, it is well known, and not hard to see, that if $f$ is a non-trivial monotone $k$-uniform hypergraph property, then $s(f)=\Omega(n^{k/2})$. 
On the other hand, in a paper recently uploaded to arxiv, Li and Sun \cite{Li}
show that for any $k \geq 2$, there exists a non-trivial $k$-uniform hypergraph property so that
\begin{equation}
s(f)=O(n^{\lceil k/3 \rceil}),
\end{equation}

 and if $k \equiv 1 (\textrm{mod}\ 3)$, this bound can be improved to
\begin{equation}
s(f)=O(n^{\lceil k/3 \rceil-1/2}).
\end{equation}

Regardless, bounds for the sensitivity of $k$-uniform hypergraph properties remain of interest, both for the monotone and for the general case.

Finally, we propose a weaker form of Conjecture \ref{Turan}, where we limit $f$ to be a non-trivial \textit{min-term} graph property. See \cite{Chakraborty} for the definition of a min-term function.

\begin{conjecture} \label{min term graph properties}
Let $f:\{0,1\}^{n \choose 2} \to \{0,1\}$ be a non-trivial \textit{min-term} graph property. Then  
\begin{equation*}
s(f)\geq n-1.
\end{equation*}

\end{conjecture}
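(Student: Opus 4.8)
The plan is to exploit the very rigid structure that the min-term hypothesis forces on the minimal graphs of $f$, and then to run a balancing argument in the spirit of Section~3, but pushed all the way to the constant $n-1$. As always, normalize so that $f(\overline{K_n})=0$. Since $f$ is a min-term graph property, its min-terms form a single $S_n$-orbit; fix one and record it as a set $A$ of required edges together with a disjoint set $B$ of forbidden pairs on a vertex set of size $k$, so that $f(G)=1$ iff there is an injection $\phi$ of these $k$ vertices into $[n]$ with $\phi(A)\subseteq E(G)$ and $\phi(B)\cap E(G)=\emptyset$. If $B=\emptyset$ then $f$ is monotone and the bound is Wegener's theorem, so we may assume $B\neq\emptyset$. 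The first real step is a structural reduction: every $G\in m(f)$ is exactly an isomorphic copy of the pattern graph $H=([k],A)$ together with $n-k$ isolated vertices. Indeed, if $\phi$ witnesses $f(G)=1$ and $e\in E(G)\setminus\phi(A)$, then $\phi$ still embeds into $G\setminus e$ (we only deleted an edge not in $\phi(A)$, and deleting edges cannot destroy the condition on $\phi(B)$), so $f(G\setminus e)=1$, contradicting minimality; hence $E(G)=\phi(A)$. Thus $m(f)$ is a single isomorphism class, and we may assume $H$ has no isolated vertices, so $k\le 2|E(H)|$.

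Fix $G\in m(f)$. Every edge of $G$ is sensitive at $G$, so $s(f)\ge s(f,G)\ge|E(H)|$, and if $|E(H)|\ge n-1$ we are done (this is the case of, e.g., the tight example ``$G$ has a vertex of degree $n-1$'', whose pattern is the star on all $n$ vertices). From now on $|E(H)|\le n-2$, which forces $k<n$ and in fact $G$ to contain at least two isolated vertices. The remaining task is to locate a single graph on which $f$ has at least $n-1$ sensitive coordinates, and the two resources are the isolated vertices of $G$ and the forbidden pairs $B$. The heart of the argument should be an algorithm analogous to those in Cases~1--3 that starts from $G$, deletes one well-chosen edge $e_0$ of the pattern to reach a $0$-graph $G_0\subsetneq G$, and then shows $G_0$ is sensitive at $\ge n-1$ coordinates: some are ``positive'' coordinates --- non-edges between an isolated vertex of $G_0$ and an appropriate attachment vertex of the mutilated pattern, whose addition recreates a copy of $H$ and flips $f$ to $1$, of which there are on the order of $n-k$ --- and the rest are coordinates governed by $B$, where the forbidden pairs of the min-term supply the additional flips. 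Summing these contributions together with $s(f)\ge|E(H)|$ from the minimal graph, one closes the gap to $n-1$ after a short case analysis on the shape of $H$ (for instance $\delta'(H)=1$ versus $\delta'(H)\ge 2$, and $H$ connected versus disconnected), using Lemma~\ref{deg1} to dispose of pendant edges exactly as in Case~2.

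The step I expect to be the genuine obstacle is making this count reach the \emph{exact} constant $n-1$, uniformly over all patterns $H$ and all forbidden sets $B$, rather than the $\Theta(n)$-with-a-smaller-constant that Theorem~\ref{main theorem} already yields. Two points are delicate. First, because $f$ is not monotone, adding an edge to a $1$-graph need not preserve the value of $f$; controlling the alternate embeddings that may survive after such an addition is precisely what the set $B$ is for, but a single $B$-pair turning into an edge need not destroy every embedding, so one must argue that minimality of the min-term makes the pattern rigid enough for the bookkeeping to go through. Second, the degenerate patterns --- $H$ a disjoint union of many small trees, or $H$ carrying a long pendant path --- are exactly the configurations that made Cases~2 and~3 hard, and for the stronger bound one must extract every sensitive coordinate from them. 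This is consistent with the fact that Conjecture~\ref{Turan} in full remains open: the payoff of restricting to min-term properties is precisely the structural reduction of the first paragraph, and the whole question is whether that rigidity suffices to recover Turán's tight constant.
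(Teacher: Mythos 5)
The statement you are proving is Conjecture~\ref{min term graph properties}, which the paper poses as an \emph{open problem} (a weakening of Turán's Conjecture~\ref{Turan}); the paper contains no proof of it, so there is nothing to compare your argument against. Judged on its own, your first step is correct and worthwhile: for a min-term graph property, any $G\in m(f)$ must satisfy $E(G)=\phi(A)$ for a witnessing embedding $\phi$, since deleting an edge outside $\phi(A)$ preserves the required edges and cannot violate the forbidden set $B$. So $m(f)$ is a single isomorphism class, namely the pattern $H$ plus isolated vertices. This rigidity is real and is exactly the payoff of the min-term hypothesis.

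Everything after that, however, is a plan rather than a proof, and it has one outright error and one unresolved core difficulty. The error: you propose ``summing these contributions together with $s(f)\geq|E(H)|$ from the minimal graph'' --- but sensitivity is a maximum over points, not a sum, so the $|E(H)|$ sensitive coordinates at $G$ and the coordinates at $G_0=G\setminus e_0$ cannot be combined; you need a \emph{single} point with $n-1$ sensitive coordinates. The unresolved difficulty: your two resources at $G_0$ are the $\approx n-k$ isolated vertices and the set $B$, and both can be nearly empty simultaneously in the hard configurations. Take the pattern $H$ to be a near-perfect matching with $|E(H)|\approx n/2$, so $k\approx n$: then $|E(H)|<n-1$, the minimal graph has $O(1)$ isolated vertices, and the ``positive'' coordinates at $G_0$ (pairs of isolated vertices whose joining recreates a copy of $H$) number only $O(1)$. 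If $B=\emptyset$ the property is monotone and Wegener's theorem applies, but for small nonempty $B$ you have given no mechanism by which the $B$-positions are sensitive at $G_0$ at all --- adding a $B$-pair to a $0$-graph does not create an embedding, and removing one from a $1$-graph may leave other embeddings intact. These matching-like patterns are precisely the configurations (Case~3 of the paper's proof of Theorem~\ref{main theorem}) where the known techniques top out at $cn$ with $c<1$, which is why the conjecture is stated as open. To make progress you would need a genuinely new idea for locating $n-1$ sensitive coordinates at one point when the pattern has $\Theta(n)$ vertices, few isolated vertices to spare, and a small forbidden set.
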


\end{document}